\definecolor{dblue}{HTML}{0072B2}
\definecolor{dorange}{HTML}{D55E00}
\definecolor{dlightblue}{HTML}{97C5DF}
\definecolor{dlightorange}{HTML}{F1CAAC}
\definecolor{dgreen}{rgb}{0.,0.6,0.}
\tikzset{
	hatch distance/.store in=\hatchdistance,
	hatch distance=10pt,
	hatch thickness/.store in=\hatchthickness,
	hatch thickness=2pt
}
\tikzset{
	invisible/.style={opacity=0},
	visible on/.style={alt={#1{}{invisible}}},
	alt/.code args={<#1>#2#3}{%
		\alt<#1>{\pgfkeysalso{#2}}{\pgfkeysalso{#3}} % \pgfkeysalso doesn't change the path
	},
}
\newsavebox\mybox
\def\signed #1{{\leavevmode\unskip\nobreak\hfil\penalty50\hskip2em
		\hbox{}\nobreak\hfil(#1)%
		\parfillskip=0pt \finalhyphendemerits=0 \endgraf}}
\pgfqpoint{\hatchdistance}{\hatchdistance}}
\newtheorem{theorem}{Theorem}
\newtheorem{assumption}{Assumption}
\newtheorem{lemma}{Lemma}
\renewcommand{\algocf@captiontext}[2]{#1\algocf@typo. \AlCapFnt{}#2} % text of caption
\def\@algocf@capt@plain{top}
\renewcommand{\algocf@makecaption}[2]{%
  \addtolength{\hsize}{\algomargin}%
  \sbox\@tempboxa{\algocf@captiontext{#1}{#2}}%
  \ifdim\wd\@tempboxa >\hsize%     % if caption is longer than a line
  \hskip .5\algomargin%
  \parbox[t]{\hsize}{\algocf@captiontext{#1}{#2}}% then caption is not centered
  \else%
  \global\@minipagefalse%
  \hbox to\hsize{\box\@tempboxa}% else caption is centered
  \fi%
  \addtolength{\hsize}{-\algomargin}%
}
\def\var{{\rm Var}}
\def\cov{{\rm Cov}}
\def\corr{{\rm Corr}}
\def\bfX{\boldsymbol{X}}
\def\bbeta{\boldsymbol{\beta}}
\begin{document}

\sectionfont{\bfseries\large\sffamily}%

\subsectionfont{\bfseries\sffamily\normalsize}%

\title{Behavioral Carry-Over Effect and Power Consideration in Crossover Trials}

\author{Danni Shi and Ting Ye\thanks{tingye1@uw.edu. This work was supported by the HIV Prevention Trials Network (HPTN) and NIH grant: NIAID 5 UM1 AI068617.}}
\affil{Department of Biostatistics, University of Washington,  Seattle, Washington 98195, U.S.A.}

%James P. Hughes, 

\maketitle

\begin{abstract}
 A crossover {trial} is an efficient trial design when there is no carry-over effect. To reduce the impact of the biological carry-over effect, a washout period is often designed. However, the carry-over effect remains an outstanding concern when a washout period is unethical or cannot sufficiently diminish the impact of the carry-over effect. The latter can occur in comparative effectiveness research where the carry-over effect is often non-biological but behavioral. In this paper, we investigate the crossover design under a potential outcomes framework with and without the carry-over effect. We find that when the carry-over effect exists and satisfies a sign condition, the basic estimator underestimates the treatment effect, which does not inflate the type I error of one-sided tests but negatively impacts the power. This leads to a power trade-off between the crossover design and the parallel-group design, and we derive the condition under which the crossover design does not lead to type I error inflation and is still more powerful than the parallel-group design. We {also} develop covariate adjustment methods for crossover trials. We evaluate the performance of cross-over design and covariate adjustment using data from the MTN-034/REACH study.

\end{abstract}
\textsf{{\bf Keywords}: Causal inference; Comparative effectiveness research; Covariate adjustment; Efficiency}

\section{Introduction}
\label{sec: intro}
A crossover trial is a longitudinal study in which patients are randomized to different sequences of treatments with the intention of comparing the effects of different treatments \citep[Chapter 8]{hills1979two,senn1994ab,kim2021handbook}. In the simplest crossover design, patients are randomly allocated to two groups: one group receives treatment 1 at period 1 and treatment 0 at period 2, and the other group receives the treatment sequence in the reverse order. The main advantage of a crossover design is that every patient can serve as their own control, which largely eliminates the between-patient variability and {makes the crossover design more efficient} than the parallel-group design. 

A key assumption for the crossover design is that there is no \textit{carry-over effect}: that is, the treatment assigned during the first period does not interfere with the outcome in the second period. Therefore, a crossover design is most common for treatments whose effect vanishes when discontinued and for non-absorbing endpoints. Examples include early phase trials (e.g., pharmacokinetic (PK) studies and dose-finding studies) and phase III studies with chronic conditions (e.g., hypertension, pain, and asthma); see \cite{jones1995case} for a review. 
Meanwhile, a washout period is often designed between period 1 and period 2 to effectively reduce the impact of the \textit{biological carry-over effect} engendered by treatments taken at period 1. {The washout can be \emph{passive} (i.e., no treatment is given during the washout period) or \emph{active} 
	(i.e., a treatment is given during the washout period but measurement is delayed until steady state is reached) \citep{senn2002time, araujo2016washout}.} However, the carry-over effect remains an outstanding concern when a washout period is inappropriate or cannot sufficiently diminish the impact of the first period. {For example, the MTN-034/REACH study \citep{nair2023adherence} is a phase 2a HIV-prevention trial evaluating the safety of and the adherence to the monthly dapivirine vaginal ring (DVR) and daily oral tenofovir disoproxil fumarate plus emtricitabine (TDF/FTC) among adolescent girls and young women aged 16-21 years. This trial uses a two-period crossover design without a passive washout period since withholding an effective HIV preventive agent from the population at risk is unethical. 
	{If participants initially using the monthly DVR (which requires no active effort to remain adherent for a month after inserting the ring) find the transition to daily pill-taking in period 2 particularly burdensome, then the ease and habit formed during the first period with the monthly DVR might carry over into period 2 and affect their adherence behavior in period 2.} This type of carry-over effect is non-biological and is hard to be eliminated by washout periods; we term it as the \textit{behavioral carry-over effect}, which is the impact that a treatment has on the subsequent outcomes {due to the altering of a participant's behavior.} It can be an important consideration in other HIV prevention trials, e.g., the TRIO study  \citep{minnis2018young}, and more broadly in comparative effectiveness research that uses the crossover design \citep{hemming2020use}.}

The presence of the carry-over effect can bias the estimation of the treatment effect. This motivates \cite{grizzle1965two}'s two-stage procedure, which first tests whether the carry-over effect exists or not, and the testing result determines whether to use the period 2 data in the analysis. This two-stage procedure was then criticized by \cite{freeman1989performance} because it has low power and can inflate the type I error of subsequent analysis. Another type of method is to model the carry-over effect \citep{brown1980crossover,laird1992analysis, jones1996modelling, kunert2002optimal,bailey2006optimal}, which are sensitive to model misspecification. 

In this paper, we examine the carry-over effect in crossover trials using a potential outcomes framework \citep{Neyman:1923a, Rubin:1974} and only require minimal statistical assumptions to gain a better understanding of its impact. Under the ideal setting with no carry-over effect, we derive the asymptotic properties of the basic estimator of the treatment effect, based on which we show that the crossover design is typically more efficient than the parallel-group design. Our findings, while aligning with existing efficiency comparisons in the literature (refer to \citet[Section 9.2.1]{senn2002time} for example), are more general, because we only assume the no carry-over effect assumption and no other assumptions (e.g., modeling assumptions or normality of data). Then we investigate the situation with the carry-over effect, uncovering new insights on its influence on estimation bias, as well as the type I error and power of tests. When the carry-over effect $\lambda_1+\lambda_0$ is negative, the basic estimator overestimates the treatment effect, which can inflate the type I error of one-sided tests. Conversely, when the carry-over effect is positive, the basic estimator underestimates the treatment effect, which does not inflate the type I error of one-sided tests but negatively impacts the power.  This leads to a power trade-off between the crossover design and the parallel-group design, and we derive the condition under which the crossover design does not lead to type I error inflation and is still more powerful than the parallel-group design. 

The rest of the paper proceeds as follows. Section \ref{sec: no carry-over} focuses on crossover trials with no carry-over effect, where we present a potential outcomes framework for crossover trials, investigate the basic estimator, and compare the efficiencies of a crossover and a parallel-group design. Section \ref{sec: crossover} extends the results to the case with the carry-over effect. Section \ref{sec: covariate adjustment} proposes a covariate adjustment method and studies its asymptotic properties. Section \ref{sec: power} examines the power trade-off of the crossover and the parallel-group design. Section \ref{sec: mtn034} presents the application of our methods to the REACH study, as well as power considerations in the design of a future HIV prevention trial using data from the REACH study.
Section \ref{sec: discussion} concludes with a discussion.

\vspace{-5mm}
\section{Crossover trials with no carry-over effect}
\label{sec: no carry-over}
\subsection{Setup and assumptions}
Consider a crossover trial for two treatments in two periods. A sample of $n$ subjects are randomly allocated to two treatment sequences, where $A_i=1$ denotes that subject $i$ first receives treatment 1 and then treatment 0, and $A_i=0$ denotes the reverse order. Let $Y_{i1}^{(j)} $ be the potential outcome at time 1 had the subject been exposed to treatment $j$ at time 1, for $j=0,1$. Let $Y_{i2}^{(jk)} $ be the potential outcome at time 2 had the subject been exposed to treatment $j$ at time 1 and treatment $k$ at time 2, for $j, k=0,1$. The observed outcome for subject $i$ at time $t$ is $Y_{it}$. Throughout the article, we make the consistency assumption that links the observed outcomes to the potential outcomes: for $A_{i}=1, Y_{i1} = Y_{i1}^{(1)}$ and $ Y_{i2}= Y_{i2}^{(10)}; $ for $A_{i}=0, Y_{i1} = Y_{i1}^{(0)}$ and $Y_{i2}= Y_{i2}^{(01)}$. Let $\bfX_i$ be a vector of observed baseline covariates for subject $i$. We assume that $ \big(A_i, \bfX_i, Y_{i1}^{(j)}, Y_{i2}^{(jk)}, j,k=0,1 \big), i=1,\dots, n$ are independent and identically distributed with finite second-order moments, and the covariance matrix $\var (\bfX_i) $ is positive definite.

Simple randomization assigns subjects to the two treatment sequences completely at random. This is summarized in Assumption \ref{assump: random}.

\begin{assumption}[Simple randomization] \label{assump: random}
	$A_i \perp \big(\bfX_i, Y_{i1}^{(j)}, Y_{i2}^{(jk)} \big)$ for $j, k= 0,1$, and $P(A_i=1)=\pi_1$ where $0<\pi_1<1$ is known and $\pi_0=1-\pi_1$.
\end{assumption}

Assumptions \ref{assump: no carry over} is the key assumption typically imposed in crossover trials: it states
that the treatment at time 1 does not have a direct effect on the outcome at time 2; see Figure \ref{fig: all assump}(a) for an illustration. Many crossover trials would plan a sufficiently long washout period between the two time periods to make this assumption more plausible. 

\begin{assumption}[No carry-over effect] \label{assump: no carry over}
	For $k =0,1$, $Y_{i2}^{(0k)} = Y_{i2}^{(1 k)}:= Y_{i2}^{(k)}$ almost surely. 
\end{assumption}

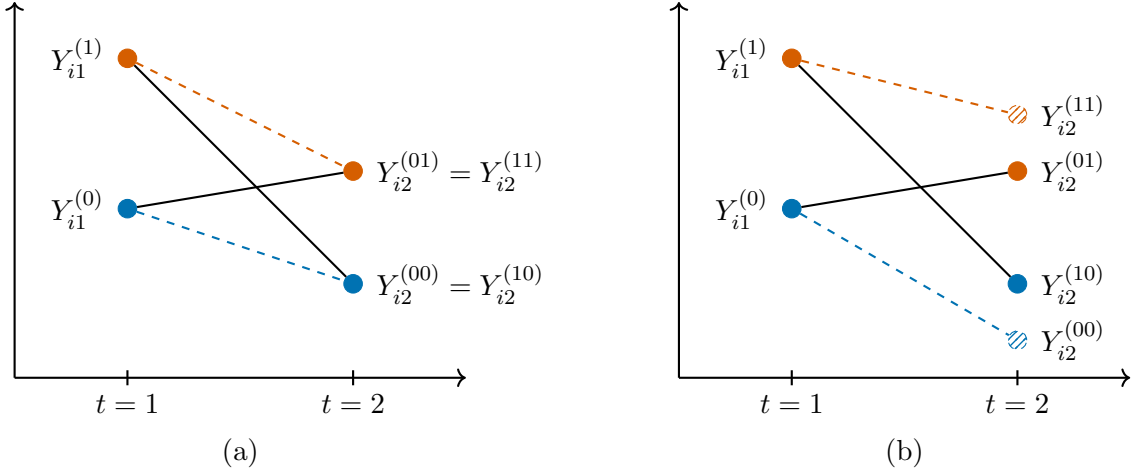
\begin{figure}[h]
	\begin{subfigure}[t]{.5\textwidth}
		\centering
		\resizebox{!}{!}{%
			\begin{tikzpicture} 
				\tikzstyle{dp}=[circle, inner sep=0pt, minimum size=7pt]
				\tikzstyle{trlink}=[draw=black, thick]
				
				\tikzstyle{uc}=[dp, draw=dblue, fill=dblue]
				\tikzstyle{uclink}=[draw=dblue, thick]
				
				\tikzstyle{lc}=[dp, draw=dorange, fill=dorange]
				\tikzstyle{lclink}=[draw=dorange, thick]
				
				\draw [->, thick] (0,0) -- (0,5) {}; 
				\draw [->, thick] (0,0) -- (6,0) {};
				\draw[thick,-] (1.5,-0.1) -- (1.5,0.1) node[anchor=north,below=5pt] {$t=1$};
				\draw[thick,-] (4.5,-0.1) -- (4.5,0.1) node[anchor=north,below=5pt] {$t=2$};
				\node at (3,-1) {(a)};
				
				\node[uc] (ua) at (1.5,2.25) {};
				\node[left=5pt] at (ua) {$Y_{i1}^{(0)}$};
				\node[lc] (la) at (1.5,4.25) {};
				\node[left=5pt] at (la) {$Y_{i1}^{(1)}$};
				\node[uc] (uc) at (4.5,1.75-0.5) {};
				\node[right=5pt] at (uc) {$Y_{i2}^{(00)}= Y_{i2}^{(10)}$};
				
				\node[lc] (lc) at (4.5,3.75-1) {};
				\node[right=5pt] at (lc) {$Y_{i2}^{(01)} = Y_{i2}^{(11)} $};
				
				\draw [uclink,dashed] (ua) -- (uc);
				\draw [lclink,dashed] (la) -- (lc);
				
				\draw [trlink] (ua) -- (lc);
				\draw [trlink] (uc) -- (la);
		\end{tikzpicture}}	
	\end{subfigure}%
	{\begin{subfigure}[t]{.5\textwidth}
			\centering
			\resizebox{!}{!}{%
				\begin{tikzpicture} 
					\tikzstyle{dp}=[circle, inner sep=0pt, minimum size=7pt]
					\tikzstyle{trlink}=[draw=black, thick]
					
					\tikzstyle{uc}=[dp, draw=dblue, fill=dblue]
					\tikzstyle{ucfac}=[dp, draw=dblue, dashed]
					\tikzstyle{uclink}=[draw=dblue, thick]
					
					\tikzstyle{lc}=[dp, draw=dorange, fill=dorange]
					\tikzstyle{lcfac}=[dp, draw=dorange, dashed]
					\tikzstyle{lclink}=[draw=dorange, thick]
					
					\draw [->, thick] (0,0) -- (0,5) {}; 
					\draw [->, thick] (0,0) -- (6,0) {};
					\draw[thick,-] (1.5,-0.1) -- (1.5,0.1) node[anchor=north,below=5pt] {$t=1$};
					\draw[thick,-] (4.5,-0.1) -- (4.5,0.1) node[anchor=north,below=5pt] {$t=2$};
					\node at (3,-1) {(b)};
					
					\node[uc] (ua) at (1.5,2.25) {};
					\node[left=5pt] at (ua) {$Y_{i1}^{(0)}$};
					\node[lc] (la) at (1.5,4.25) {};
					\node[left=5pt] at (la) {$Y_{i1}^{(1)}$};
					\node[uc] (uc) at (4.5,1.75-0.5) {};
					\node[right=5pt] at (uc) {$Y_{i2}^{(10)}$};
					\node[ucfac, pattern=flexible hatch, hatch distance=4pt, hatch thickness=.5pt, pattern color=dblue] (ud) at (4.5,1.75-1.25) {};
					\node[right=5pt] at (ud) {$Y_{i2}^{(00)}$};
					
					\node[lc] (lc) at (4.5,3.75-1) {};
					\node[right=5pt] at (lc) {$Y_{i2}^{(01)} $};
					\node[lcfac, pattern=flexible hatch, hatch distance=4pt, hatch thickness=.5pt, pattern color=dorange] (ld) at (4.5,3.75-0.25) {};
					\node[right=5pt] at (ld) {$Y_{i2}^{(11)} $};
					
					\draw [uclink,dashed] (ua) -- (ud);
					\draw [lclink,dashed] (la) -- (ld);
					
					\draw [trlink] (ua) -- (lc);
					\draw [trlink] (uc) -- (la);
			\end{tikzpicture}}
	\end{subfigure}}
	\caption{(a) Six potential outcomes under Assumption \ref{assump: no carry over}. (b) Six potential outcomes without Assumption \ref{assump: no carry over}.}
	\label{fig: all assump}
\end{figure}

Under Assumption \ref{assump: no carry over}, we can simply let $Y_{i2}^{(k)}, k=0,1$ denote the potential outcome at time 2. We are interested in the average treatment effects {$2^{-1} (\theta_1 + \theta_2)$}, where $\theta_1= E\big(Y_{i1}^{(1)} - Y_{i1}^{(0)}\big)$ and $\theta_2= E\big( Y_{i2}^{(1)} -Y_{i2}^{(0)} \big)$. Note that it is common to assume that the treatment effect is time-invariant so that $\theta_1 = \theta_2$ \citep{hills1979two}. However, as shown in Theorem \ref{theo: basic estimator} below, a time-invariant treatment effect assumption is not necessary for studying treatment effect in crossover trials. 

\subsection{The basic estimator}
Using the full crossover period, we can calculate the difference in outcomes for every subject under treatment 1 and treatment 0. The \textit{basic estimator} is commonly used, which first calculates the arm-specific average outcome difference, and then averages the two means from the two arms \citep[Chapter 8]{hills1979two,senn1994ab,kim2021handbook}:
\begin{align}\label{eq: basic estimator}
	\hat \theta_{\rm cr}=\frac12 \bigg\{  \frac{1}{n_1} \sum_{i=1}^{n}  A_i  ( Y_{i1} - Y_{i2} )  +  \frac{1}{n_0} \sum_{i=1}^{n} (1-A_i)  (  Y_{i2}  - Y_{i1} )\bigg\}:= \frac{\bar \Delta_1 - \bar \Delta_0 }{2},
\end{align}
where $\bar \Delta_a = n_a^{-1} \sum_{A_i =a} \Delta_i $ is the average of $\Delta_i$'s for subjects with $A_i=a$, $\Delta_i = Y_{i1} - Y_{i2}$, and $n_a$ is the number of subjects with $A_i=a$, for $a=0,1$. Theorem \ref{theo: basic estimator} summarizes the statistical properties of the 
basic estimator $\hat \theta_{\rm cr}$. 

\begin{theorem}\label{theo: basic estimator}
	Under Assumptions \ref{assump: random}-\ref{assump: no carry over}, \\
	(a) $E(	\hat \theta_{\rm cr}) = 2^{-1}(\theta_1+\theta_2)$, where $\theta_t = E\big(Y_{it}^{(1)} - Y_{it}^{(0)} \big)$ for $t=1,2$.\\
	(b) $	\sqrt{n} \{	\hat \theta_{\rm cr} - 2^{-1}(\theta_1+\theta_2)\} \xrightarrow{d} N( 0,  \sigma^2_{\rm cr})$, where $\sigma^2_{\rm cr}  = (4\pi_1)^{-1} \var\big(Y_{i1}^{(1)}-Y_{i2}^{(0)} \big)$ $+(4\pi_0)^{-1}$ $\var\big(Y_{i1}^{(0)} - Y_{i2}^{(1)} \big)$.
\end{theorem}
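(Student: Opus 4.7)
The plan is to first strip off the no-carry-over assumption to rewrite $\Delta_i$ in terms of iid latent contrasts that are independent of the treatment assignment, and then analyze the two arm-specific means separately and combine.

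Under Assumption \ref{assump: no carry over}, $Y_{i2}^{(jk)}$ depends only on $k$; writing $Y_{i2}^{(k)}:=Y_{i2}^{(0k)}=Y_{i2}^{(1k)}$ and applying consistency,
\[
\Delta_i = Y_{i1}-Y_{i2} = A_i W_{1i} + (1-A_i) W_{0i},
\quad
W_{1i}:=Y_{i1}^{(1)}-Y_{i2}^{(0)},
\quad
W_{0i}:=Y_{i1}^{(0)}-Y_{i2}^{(1)}.
\]
By Assumption \ref{assump: random}, $\{(W_{1i},W_{0i})\}_{i=1}^{n}$ are iid with finite second moments and are jointly independent of $(A_1,\dots,A_n)$. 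A direct calculation then gives the linear identity $E(W_{1i})-E(W_{0i})=\theta_1+\theta_2$.

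For part (a), I condition on $(A_1,\dots,A_n)$. On the event $n_1>0$ the indices with $A_i=1$ form a fixed subset and $\bar\Delta_1$ is the sample mean of the corresponding $W_{1i}$'s, so the conditional independence above yields $E(\bar\Delta_1\mid A_1,\dots,A_n)=E(W_{11})$, and the analogue holds for $\bar\Delta_0$. Marginalizing and using the linear identity produces $E(\hat\theta_{\rm cr})=(\theta_1+\theta_2)/2$ (the measure-zero event $n_a=0$ is handled by the convention $\bar\Delta_a=0$ there and becomes negligible asymptotically). For part (b), decompose
\[
\sqrt{n}\{\hat\theta_{\rm cr}-(\theta_1+\theta_2)/2\}
= \tfrac{1}{2}\sqrt{n}\{\bar\Delta_1-E(W_{11})\}
- \tfrac{1}{2}\sqrt{n}\{\bar\Delta_0-E(W_{01})\}.
\]
Conditional on the assignment, $\sqrt{n_a}\{\bar\Delta_a-E(W_{a1})\}\xrightarrow{d} N(0,\var(W_{a1}))$ by the Lindeberg–Lévy CLT on iid variables; since $n_a/n\xrightarrow{p}\pi_a$ by the Bernoulli LLN, Slutsky lifts this to $\sqrt{n}\{\bar\Delta_a-E(W_{a1})\}\xrightarrow{d} N(0,\var(W_{a1})/\pi_a)$. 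Because the two arm means use disjoint subject sets, they are conditionally independent given $(A_1,\dots,A_n)$, so the joint limit factors as a product of independent normals, and the claimed linear combination produces $N(0,\sigma^2_{\rm cr})$.

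The main delicacy is the clean handling of the random arm sizes when invoking the CLT. A tidy equivalent route that sidesteps the conditioning is the Horvitz–Thompson-style rewrite $\sqrt{n}\{\bar\Delta_a-E(W_{a1})\} = (n/n_a)\cdot n^{-1/2}\sum_{i=1}^n \mathbf{1}\{A_i=a\}\{W_{ai}-E(W_{a1})\}$: the inner sum is a standard iid mean-zero average to which the multivariate CLT applies jointly in $a\in\{0,1\}$, the off-diagonal asymptotic covariance vanishes because $\mathbf{1}\{A_i=1\}\mathbf{1}\{A_i=0\}=0$, and $n/n_a\xrightarrow{p}1/\pi_a$ brings in the $1/\pi_a$ factor by Slutsky, yielding $\sigma^2_{\rm cr}$ directly.
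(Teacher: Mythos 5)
Your proof is correct and follows essentially the same route as the paper: both reduce $\Delta_i$ to the arm-specific contrasts $W_{1i}=Y_{i1}^{(1)}-Y_{i2}^{(0)}$ and $W_{0i}=Y_{i1}^{(0)}-Y_{i2}^{(1)}$, compute the expectation by conditioning on the assignment vector, and establish the CLT conditionally on $(A_1,\dots,A_n)$ before lifting it unconditionally (your Horvitz--Thompson rewrite is simply a tidier formalization of that last step). One cosmetic point: under simple randomization the event $n_a=0$ has probability $\pi_{1-a}^{\,n}>0$ rather than zero, so the exact unbiasedness in (a) implicitly requires a convention on that event --- a detail the paper glosses over as well.
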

Proof of Theorem \ref{theo: basic estimator} and all other proofs are in the supplementary materials. 
In the proof of Theorem \ref{theo: basic estimator}(a), we show that the expected change in outcome for $A_i=1$ is the average treatment effect at time 1 minus the expected change in outcome in the absence of the treatment, i.e., $E(Y_{i1} - Y_{i2} \mid A_i=1)= E\big(Y_{i1}^{(1)} - Y_{i2}^{(0)} \big)= E\big(Y_{i1}^{(1)} -Y_{i1}^{(0)}+Y_{i1}^{(0)}-  Y_{i2}^{(0)} \big) = \theta_1 - \tau$, where $\tau = E\big(Y_{i2}^{(0)} - Y_{i1}^{(0)} \big)$ denotes the expected {temporal} change in outcome in the absence of the treatment. Similarly, the expected change in outcome for $A_i=0$ is the average treatment effect plus the expected change in outcome in the absence of the treatment, i.e., $E( Y_{i2} - Y_{i1} \mid A_i=0)= E\big( Y_{i2}^{(1)} - Y_{i1}^{(0)} \big)= E\big(Y_{i2}^{(1)} - Y_{i2}^{(0)} + Y_{i2}^{(0)}-Y_{i1}^{(0)} \big) = \theta_2 + \tau$. At first glance, it appears that both the group-specific average changes in outcome are biased by $\tau$. However, randomization balances out the effect of temporal trend that is not due to the treatment, and thus the overall average change in outcome remains unbiased for $2^{-1} (\theta_1+\theta_2)$. Moreover, with a given sample size $n$, Theorem \ref{theo: basic estimator}(b) guides the optimal choice of $\pi_1$ to minimize the asymptotic variance. For example, if $\var\big(Y_{i1}^{(1)}-Y_{i2}^{(0)} \big)=\var \big(Y_{i1}^{(0)} - Y_{i2}^{(1)} \big)$, the optimal choice would be equal allocation with $\pi_1=\pi_0= 1/2$. For statistical inference, $\sigma^2_{\rm cr}$ can be consistently estimated by 
\begin{equation}\label{eq: basic est var}
	\hat \sigma_{\rm cr}^2 = \frac{1}{4\pi_1} S_{\Delta1}^2 + \frac{1}{4\pi_0}  S_{\Delta0}^2, 
\end{equation}
where $S_{\Delta1}^2 $ and $S_{\Delta0}^2 $  are respectively the sample variance of $\Delta_i = Y_{i1}- Y_{i2}$ for subjects under $A_i=1$ and $A_i=0$. 

There is another estimator that looks similar to the basic crossover estimator $\hat\theta_{\rm cr}$: $\hat\theta_{\rm cr}^{\rm alt} = n^{-1}\sum_{i=1}^{n} \{ A_i  ( Y_{i1} - Y_{i2} )  +  (1-A_i)  (  Y_{i2}  - Y_{i1} ) \}$.  These two estimators $\hat \theta_{\rm cr}$ and  $\hat \theta_{\rm cr}^{\rm alt} $ are the same under randomization schemes that enforce $n_1=n_0$, but they are not the same in other cases (e.g., under the simple randomization considered in this article). Specifically, under simple randomization with equal allocation ($\pi_1=\pi_0=1/2$), $\hat \theta_{\rm cr}^{\rm alt} $ is also unbiased for $2^{-1} (\theta_1+\theta_2)$; however, the asymptotic variance of $\sqrt{n} \big\{\hat \theta_{\rm cr}^{\rm alt}  - 2^{-1}(\theta_1+\theta_2) \big\}$ equals $2^{-1} \var \big(Y_{i1}^{(1)}-Y_{i2}^{(0)} \big)  + 2^{-1}\var \big(Y_{i1}^{(0)} - Y_{i2}^{(1)} \big) +4^{-1}(\theta_1- \theta_2 - 2\tau)^2$, which is larger than the asymptotic variance of $\hat \theta_{\rm cr}$. The additional variance component  $4^{-1}(\theta_1- \theta_2 - 2\tau)^2$ is due to the random effect of the temporal trend and the treatment effect heterogeneity at the two time points. Under simple randomization with unequal allocation, $\hat \theta_{\rm cr}^{\rm alt} $ is biased due to the time effect $\tau$. This important point is also discussed in \cite{hills1979two}. Therefore, we do not consider this alternative estimator $\hat\theta_{\rm cr}^{\rm alt} $ in the rest of the article.

\subsection{Efficiency comparison between a crossover and parallel-group design}
When there is no carry-over effect, i.e., when Assumption \ref{assump: no carry over} holds, it is well known that a crossover design is typically more efficient than a parallel-group one. This is because each subject can serve as their own control, which cuts down half the sample size, and the within-subject comparison can further remove the inter-subject variability \citep{jones2014design}. Prior work such as \citet[Section 9.2.1]{senn2002time} has provided efficiency comparisons between a crossover and parallel-group design when there is no carry-over effect and under various additional assumptions (e.g., modeling assumptions, time-invariant treatment effect, and normality of data). In this section, we provide a more general efficiency comparison which assumes only Assumption \ref{assump: no carry over} and no other assumptions.

Suppose a randomized controlled trial is being planned to demonstrate the superiority or non-inferiority of an investigational treatment. First, consider a crossover design with the null hypothesis $H_0:  2^{-1}(\theta_1+\theta_2) = \theta^*$ versus the alternative hypothesis $H_A:  2^{-1}(\theta_1+\theta_2)> \theta^*$ for some pre-specified $\theta^*$, where $\theta^*=0$ for test of superiority and $\theta^*>0$ for test of non-inferiority. The test statistic based on the basic estimator is $T_{\mathrm{cr}} =\sqrt{n} \big(\hat{\theta}_{\mathrm{cr}} -\theta^* \big)/ \hat{\sigma}_{\mathrm{cr}}$. 
From Theorem \ref{theo: basic estimator}, $T_{\mathrm{cr}} \xrightarrow{d} N(0,1)$ under $H_0$, and thus, we reject $H_0$ if and only if $T_{\mathrm{cr}} >z_{1-\alpha}$, where $\alpha$ is the significance level and $z_{1-\alpha}$ is the $(1-\alpha)$th quantile of the standard normal distribution.
Under the local alternative $\sqrt{n} \big\{2^{-1}(\theta_1+\theta_2) -\theta^* \big\} \rightarrow \gamma_{\rm cr}$ with a constant $\gamma_{\rm cr}>0$, the power of $T_{\mathrm{cr}}$ is
$\text{Power}_{\rm cr}   \approx \Phi \left( - z_{1-\alpha} + {\gamma_{\rm cr}}/{\sigma_{\rm cr}}\right),$ 
where $\Phi (\cdot)$ is the cumulative distribution function of the standard normal distribution and $\approx$ denotes asymptotic approximation.

Now suppose we only use the data at time 1, which is effectively a parallel-group design. The parallel-group counterpart to $\hat\theta_{\rm cr}$ is a simple mean difference $\hat{\theta}_{\mathrm{pr}}= \frac{1}{n_1} \sum_{i=1}^n A_i Y_{i1} -\frac{1}{n_0} \sum_{i=1}^n (1-A_i) Y_{i1}$. Under Assumption \ref{assump: random}, we have $E\big(\hat{\theta}_{\mathrm{pr}} \big) =\theta_1$ and $ \sqrt{n} \big( \hat{\theta}_{\mathrm{pr}} -\theta_1 \big) \xrightarrow{d} N(0, \sigma_{\mathrm{pr}}^2)$, where $\sigma_{\mathrm{pr}}^2= \pi_1^{-1}\var \big(Y_{i1}^{(1)} \big) +\pi_0^{-1}\var \big(Y_{i1}^{(0)} \big)$. For testing the null hypothesis $H_0:  \theta_1 = \theta^*$ versus the alternative hypothesis $H_A: \theta_1 > \theta^*$, the test statistic is $T_{\mathrm{pr}}= \sqrt{n} \big(\hat{\theta}_{\mathrm{pr}} -\theta^* \big) /\hat{\sigma}_{\mathrm{pr}}$, where $\hat{\sigma}_{\mathrm{pr}}^2$ is a consistent estimator for ${\sigma}_{\mathrm{pr}}^2$. Under the local alternative $\sqrt{n} (\theta_1 -\theta^*) \rightarrow \gamma_{\rm pr}$ with a constant $\gamma_{\rm pr}>0$, the power of $T_{\mathrm{pr}}$ is 
\begin{align} 
	\text{Power}_{\rm pr} \approx \Phi \left( -z_{1-\alpha} + \frac{\gamma_{\rm pr}}{\sigma_{\rm pr}} \right). \label{eq: power pr}
\end{align}

Let $\text{Power}_{\rm cr}= \text{Power}_{\rm pr}= 1-\beta$, we obtain the required sample sizes to achieve power of $1-\beta$ under the two designs: 
\begin{align*}
	n_{\rm cr}= \frac{(z_{1-\alpha} +z_{1-\beta})^2 \sigma_{\rm cr}^2}{ \{2^{-1} (\theta_1+\theta_2)- \theta^*\}^2} \text{ and } n_{\rm pr}= \frac{(z_{1-\alpha} +z_{1-\beta})^2 \sigma_{\rm pr}^2}{ (\theta_1- \theta^*)^2}.
\end{align*}
These sample size formulas have been previously derived, for example, in \citet[Section 2.4]{jones2014design} under additional assumptions such as the time-invariant treatment effect. 
When $\theta_1= \theta_2$, the two aforementioned null hypotheses are the same, and the ratio of the sample sizes to achieve the same power using the two tests, which is also called the Pitman asymptotic relative efficiency, is 
\[
\frac{n_{\rm cr}}{n_{\rm pr}}
=\frac{\sigma_{\rm cr}^2}{ \sigma_{\rm pr}^2} \cdot \frac{(\theta_1 - \theta^*)^2}{  \{2^{-1} (\theta_1+\theta_2)- \theta^*\}^2}.
\]
For illustration, consider a simple case where $\theta_1=\theta_2$,
$\var \big(Y_{it}^{(j)} \big)= \sigma^2$ for $t=1,2$ and $j=0,1$, and $\cov \big(Y_{i1}^{(j)}, Y_{i2}^{(1-j)} \big) ={\rho} \sigma^2$, where ${\rho \in [0,1)}$ is the intraclass correlation coefficient (ICC). Then $\sigma_{\rm cr}^2= 2(1-\rho) \sigma^2$ and $\sigma_{\rm pr}^2= 4\sigma^2$, and $n_{\rm cr}/ n_{\rm pr}=(1-\rho)/2 $. 
This implies that the crossover design only requires $(1-\rho)/2$ {i.e., at most half} of the sample size required by the parallel-group design, which is the key advantage of the crossover design over the parallel-group design.

\vspace{-5mm}

\section{Crossover trials with carry-over effect}
\label{sec: crossover}
\subsection{Setup and assumptions}
When there exists the carry-over effect in a crossover trial, the treatment at time 1 may interfere with the outcome at time 2;  see Figure \ref{fig: all assump}(b) for an illustration of the six potential outcomes with carry-over effects. In this case, Assumption \ref{assump: random} still holds by the act of randomization, while Assumption \ref{assump: no carry over} is violated.

We present a way of parameterizing the expected values of the six potential outcomes in Table \ref{tb: parameterization} that directly extends the parameterization used in Section \ref{sec: no carry-over}. In Table \ref{tb: parameterization}, $\tilde \theta_2= E\big(Y_{i2}^{(11)} -Y_{i2}^{(00)} \big)$ denotes the average treatment effect at time $2$, which compares the potential outcome had one stayed on treatment 1 to the potential outcome had one stayed on treatment 0; $\tilde \tau = E\big(Y_{i2}^{(00)} - Y_{i1}^{(0)} \big)$ denotes the expected {temporal} change in outcome had one stayed on treatment 0; $\lambda_{0}= E\big( Y_{i2}^{(10)}- Y_{i2}^{(00)} \big)$ and $\lambda_{1}= E\big( Y_{i2}^{(11)} - Y_{i2}^{(01)} \big)$ are the expected carry-over effects under the two treatment regimes.

\begin{table}[h]
	\caption{Parameterization of the six potential outcome means with and without Assumption \ref{assump: no carry over}. The two shadowed rows correspond to the potential outcomes that are never observed.
	} \label{tb: parameterization}
	\centering
	\begin{tabular}{|ccc|} \toprule
		Potential outcome means & With Assumption \ref{assump: no carry over} & Without Assumption \ref{assump: no carry over} \\ 
		$E( Y_{i1}^{(0)} )$ & $\mu$ & $\mu$\\ 
		$E( Y_{i1}^{(1)} )$ & $\mu+\theta_1$ & $\mu+\theta_1$ \\
		\rowcolor{gray!20} $E( Y_{i2}^{(00)} )$& $\mu+\tau$ &  $\mu+\tilde\tau$ \\
		$E( Y_{i2}^{(10)} )$& $\mu+\tau$ & $\mu+\tilde\tau+\lambda_{0}$ \\
		\rowcolor{gray!20} $E( Y_{i2}^{(11)} )$& $\mu+\tau+\theta_2$ &  $\mu+\tilde\tau+\tilde\theta_2$\\
		$E( Y_{i2}^{(01)} )$& $\mu+\tau+\theta_2$ & $\mu+\tilde\tau+\tilde\theta_2 - \lambda_{1}$\\\bottomrule
	\end{tabular}
\end{table}

\subsection{The basic estimator}

Theorem \ref{theo: basic estimator violate} derives 
the statistical properties of the basic estimator $\hat\theta_{\rm cr}$ defined in \eqref{eq: basic estimator} without Assumption \ref{assump: no carry over}. 

\begin{theorem}\label{theo: basic estimator violate}
	Under Assumption 1, \\
	(a) $E\big( \hat \theta_{\rm cr} \big) = 2^{-1} \big(\theta_1 + \tilde\theta_2 - \lambda_{0} - \lambda_{1} \big)$.  \\
	(b) $\sqrt{n} \big\{\hat \theta_{\rm cr} - 2^{-1} \big(\theta_1 + \tilde\theta_2 - \lambda_{0} - \lambda_{1} \big) \big\} \xrightarrow{d} N\big( 0,  \tilde \sigma^{2}_{\rm cr}  \big)$, where $ \tilde\sigma^{2}_{\rm cr}  = (4\pi_1)^{-1} \var \big(Y_{i1}^{(1)}-Y_{i2}^{(10)} \big)  +(4\pi_0)^{-1} \var \big(Y_{i2}^{(01)}-Y_{i1}^{(0)} \big)$.
\end{theorem}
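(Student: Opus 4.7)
The plan is to mirror the argument used for Theorem \ref{theo: basic estimator}, replacing the middle column of Table \ref{tb: parameterization} for the period-2 potential outcomes by the more general right-hand column that permits carry-over. Because $\hat\theta_{\rm cr} = 2^{-1}(\bar\Delta_1 - \bar\Delta_0)$ is a simple linear function of two arm-specific sample means, both (a) and (b) reduce to computing the conditional mean, variance, and limiting distribution of $\bar\Delta_a$ given the assignment indicators, and then combining the two arms.

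For part (a), I would first invoke consistency to write $\Delta_i = Y_{i1}^{(1)} - Y_{i2}^{(10)}$ on $\{A_i = 1\}$ and $-\Delta_i = Y_{i2}^{(01)} - Y_{i1}^{(0)}$ on $\{A_i = 0\}$. Conditioning on the assignment vector and invoking Assumption \ref{assump: random} (so the potential outcomes are independent of the $A_i$'s), the conditional mean of $\bar\Delta_1$ equals $E(Y_{i1}^{(1)} - Y_{i2}^{(10)}) = \theta_1 - \tilde\tau - \lambda_0$ by the parameterization in Table \ref{tb: parameterization}, and the conditional mean of $-\bar\Delta_0$ equals $E(Y_{i2}^{(01)} - Y_{i1}^{(0)}) = \tilde\tau + \tilde\theta_2 - \lambda_1$. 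Averaging, the temporal-trend term $\tilde\tau$ cancels and gives $E(\hat\theta_{\rm cr}) = 2^{-1}(\theta_1 + \tilde\theta_2 - \lambda_0 - \lambda_1)$. The degenerate events $\{n_a = 0\}$ have exponentially small probability under simple randomization and can be ruled out by convention or absorbed asymptotically.

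For part (b), conditionally on the assignments the $\Delta_i$'s within arm $a$ are i.i.d.\ with variance $\sigma_1^2 = \var(Y_{i1}^{(1)} - Y_{i2}^{(10)})$ and $\sigma_0^2 = \var(Y_{i2}^{(01)} - Y_{i1}^{(0)})$, respectively. The strong law applied to the Bernoulli $A_i$'s gives $n_a/n \to \pi_a$ almost surely, so the classical CLT plus Slutsky yields $\sqrt{n}(\bar\Delta_a - \mu_a) \xrightarrow{d} N(0, \sigma_a^2/\pi_a)$ for each arm, where $\mu_a$ is the mean computed in (a). Since the two arm-specific sums involve disjoint index sets, they are conditionally independent, and a linear combination argument delivers
\[
\sqrt{n}\big\{\hat\theta_{\rm cr} - 2^{-1}(\mu_1 - \mu_0)\big\} \xrightarrow{d} N\big(0, \tilde\sigma^2_{\rm cr}\big),
\]
with the stated variance $\tilde\sigma^2_{\rm cr} = (4\pi_1)^{-1}\sigma_1^2 + (4\pi_0)^{-1}\sigma_0^2$.

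The main obstacle is not conceptual but a bookkeeping exercise to handle the randomness of $n_1$ and $n_0$ under simple randomization. A clean way is to write $\sqrt{n}(\bar\Delta_a - \mu_a) = \sqrt{n/n_a}\cdot \sqrt{n_a}(\bar\Delta_a - \mu_a)$ and apply Slutsky using $\sqrt{n/n_a} \xrightarrow{p} 1/\sqrt{\pi_a}$. Apart from this, the derivation is a direct adaptation of the proof of Theorem \ref{theo: basic estimator}, the only substantive change being the appearance of two distinct period-2 potential outcomes $Y_{i2}^{(10)}$ and $Y_{i2}^{(01)}$ in place of $Y_{i2}^{(0)}$ and $Y_{i2}^{(1)}$.
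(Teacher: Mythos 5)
Your proposal is correct and follows essentially the same route as the paper's proof: reduce to the arm-specific expected changes $E(Y_{i1}^{(1)}-Y_{i2}^{(10)})=\theta_1-\tilde\tau-\lambda_0$ and $E(Y_{i2}^{(01)}-Y_{i1}^{(0)})=\tilde\theta_2+\tilde\tau-\lambda_1$ via consistency, randomization, and the parameterization in Table \ref{tb: parameterization} (the paper writes out the telescoping sum you read off the table), then average so $\tilde\tau$ cancels, and obtain part (b) by a conditional CLT given the assignments followed by de-conditioning. The only cosmetic difference is that the paper invokes the bounded convergence theorem to pass from the conditional to the unconditional limit, whereas you phrase the same step via $n_a/n\to\pi_a$ and Slutsky.
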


Similar to the proof of Theorem \ref{theo: basic estimator}(a), we show that the expected change in outcome for $A_i=1$ is  $E(Y_{i1} - Y_{i2} \mid A_i=1)= E\big(Y_{i1}^{(1)} - Y_{i2}^{(10)} \big)= E\big(Y_{i1}^{(1)} -Y_{i1}^{(0)} + Y_{i1}^{(0)} - Y_{i2}^{(00)} + Y_{i2}^{(00)} - Y_{i2}^{(10)} \big) = \theta_1 - \tilde \tau - \lambda_{0}$, that for $A_i=0$ is $E( Y_{i2} - Y_{i1} \mid A_i=0)= E\big( Y_{i2}^{(01)} - Y_{i1}^{(0)} )= E(Y_{i2}^{(01)} - Y_{i2}^{(11)} + Y_{i2}^{(11)} - Y_{i2}^{(00)} + Y_{i2}^{(00)}-Y_{i1}^{(0)} \big) = \tilde \theta_2 + \tilde \tau - \lambda_{1}$. Hence, the overall average change in outcome is $2^{-1} \big(\theta_1 + \tilde\theta_2 - \lambda_{0} - \lambda_{1} \big)$. Hence, Theorem \ref{theo: basic estimator violate}(a) implies that $\hat\theta_{\rm cr}$ is still {an unbiased estimator} of {the} treatment effect $2^{-1} \big(\theta_1 + \tilde\theta_2 \big)$ under a population-level no carry-over effect assumption ($E\big(Y_{i2}^{(1k)} \big)= E \big(Y_{i2}^{(0k)} \big)$ for $k=0,1$), which is weaker than the individual-level no carry-over effect assumption stated in Assumption \ref{assump: no carry over}.  It is also straightforward to verify that Theorem \ref{theo: basic estimator} is a special case of Theorem \ref{theo: basic estimator violate} under Assumption \ref{assump: no carry over}.
Lastly, note that $\hat{\sigma}_{\rm cr}^2$ defined in (\ref{eq: basic est var}) is still a consistent estimator of $ {\tilde\sigma_{\rm cr}}^2$ without Assumption \ref{assump: no carry over}.

\subsection{Type I error and power analysis}
\label{subsec: type I error and power with carry over}

With possible carry-over effects, the first question one would ask is whether using $T_{\rm cr}$ to analyze crossover trials would lead to type I error inflation. To answer this question, consider the null hypothesis $H_0: 2^{-1} \big(\theta_1 +\tilde \theta_2 \big)= \theta^*$ versus the alternative hypothesis $H_A:2^{-1} \big(\theta_1 +\tilde \theta_2 \big)> \theta^*$ for some pre-specified $\theta^*$.
The type I error rate of $T_{\rm cr}= \sqrt{n}\big(\hat{\theta}_{\rm cr} -\theta^* \big)/ \hat{\sigma}_{\rm cr}$ is 
\begin{align}
	P_{H_0} (T_{\rm cr} >z_{1-\alpha}) & = P_{H_0} \left( \sqrt{n} \cdot \frac{\hat{\theta}_{\rm cr} -E \big(\hat{\theta}_{\rm cr} \big) +E\big(\hat{\theta}_{\rm cr} \big)  -\theta^*}{\hat{\sigma}_{\rm cr}} >z_{1-\alpha} \right)\nonumber \\
	&
	\approx \Phi \left( -z_{1-\alpha} -\sqrt{n} \cdot \frac{ 2^{-1} (\lambda_{0}+\lambda_{1})}{ \tilde \sigma_{\rm cr}} \right). \nonumber
\end{align}
When $\lambda_{0}+\lambda_{1}=0$, $\hat \theta_{\rm cr}$ is an unbiased estimator for the treatment effect $2^{-1} \big(\theta_1 +\tilde \theta_2 \big)$ and the type I error rate of  $T_{\rm cr}$ is $\alpha$. When $\lambda_{0}+\lambda_{1}>0$,  $\hat \theta_{\rm cr}$ underestimates the treatment effect $2^{-1} \big(\theta_1 +\tilde \theta_2 \big)$ and the type I error rate is less than $\alpha$, meaning that the test is conservative. When $\lambda_{0}+\lambda_{1}<0$, $\hat \theta_{\rm cr}$ overestimates the treatment effect $2^{-1} \big(\theta_1 +\tilde \theta_2 \big)$ and the type I error rate is larger than $\alpha$, meaning that the test is invalid. Therefore, even with the carry-over effect, the crossover design can still control the type I error rate of a one-sided test when $\lambda_{0}+\lambda_{1}\geq 0$. 

However, a carry-over effect that does not inflate the type I error rate can have a negative impact on the power. Specifically, with possible carry-over effects and under the local alternative $\sqrt{n} \big\{ 2^{-1} (\theta_1 +\tilde \theta_2)- \theta^* \big\} \rightarrow \gamma_{\rm cr}$ with a constant $\gamma_{\rm cr}>0$, the power of $T_{\rm cr}$ is 
\begin{align}
	{\text{Power}}_{\rm cr} \approx \Phi \left( -z_{1-\alpha}+  \frac{ \gamma_{\rm cr}- \sqrt{n} 2^{-1} (\lambda_{0}+\lambda_{1}) }{ \tilde \sigma_{\rm cr}} \right). \label{eq: power cr general}
\end{align}
{\citet[Table 1]{senn1997} briefly notes the impact of carry-over on power but does not provide a power formula. Our Equation \eqref{eq: power cr general} provides a general power formula in the presence of carry-over effects, which produces results that are the same as those in \citet[Table 1]{senn1997}
	(with $n=44$, $\lambda_0+\lambda_1 =0, 0.5, 1, ..., 5, \tilde\sigma_{\rm cr}^2= 96, 2^{-1}(\theta_1+\tilde\theta_2) = 5, \theta^*=0, \alpha=0.025$).} Equation \eqref{eq: power cr general} also implies that the required sample size to achieve $1-\beta$ power is 
\begin{align*}
	\tilde n_{\rm cr}= \frac{(z_{1-\alpha} + z_{1-\beta})^2 \tilde \sigma_{cr}^2 }{ \{2^{-1}(\theta_1 + \tilde\theta_2 - \lambda_{0} - \lambda_{1} )-\theta^*\}^2}. 
\end{align*}

For illustration, consider a test of superiority and suppose that $\theta_1 =\tilde\theta_2= \theta_{\mathrm{Alt}} >0$ and the Pitman asymptotic relative efficiency between the crossover and the parallel-group design is 
\begin{align}
	\frac{\tilde{n}_{\rm cr}}{n_{\rm pr}} = \frac{\tilde\sigma_{\rm cr}^2 }{\sigma_{\rm pr}^2 } \cdot \left\{ 1 - \frac{\lambda_{0}+\lambda_{1}}{2\ \theta_{\rm Alt}}\right\}^{-2}. \nonumber
\end{align}
When $\theta_1 + \tilde\theta_2 >  \lambda_{0} +\lambda_{1} \geq 0$, i.e., the carry-over effects are non-negative {for the purpose of controlling type I error rate} and are smaller than the treatment effects, the difference between $n_{\rm pr}$ and $\tilde n_{\rm cr}$ equals
\[
\tilde n_{\rm cr} - n_{\rm pr} = c \left\{   \frac{2^{-1} (\lambda_{0}+\lambda_{1} )}{ \theta_{\rm Alt}} - \bigg(1 - \frac{\tilde\sigma_{\rm cr}}{\sigma_{\rm pr}} \bigg) \right\},
\]
where $c$ is some positive constant. Therefore, in order for the crossover design to have higher efficiency than the parallel-group one, we need the carry-over effect to be small. Specifically, similar to our discussion at the end of Section \ref{sec: no carry-over}: when $\tilde \sigma_{\rm cr}/ \sigma_{\rm pr} = \sqrt{(1-\rho)/2}$, we need 
$\frac{2^{-1} (\lambda_{0}+\lambda_{1} ) }{ \theta_{\rm Alt}} < 1-  \sqrt{(1-\rho)/2}$.
When $\rho= 0.3,0.5,0.7$, we need to have $2^{-1} (\lambda_0+ \lambda_1)>0$ and less than $0.41 \theta_{\rm Alt}, 0.5\theta_{\rm Alt}, 0.61 \theta_{\rm Alt}$ respectively so that the type I error is not inflated and the crossover design is more powerful than the parallel-group one. This small carry-over effect condition may be plausible in many scenarios because carry-over effects are usually relatively small compared to the treatment effects. Therefore, a crossover design can still be more powerful than a parallel-group design in many cases even when the carry-over effect exists. In Section S2 of the supplementary materials, we discuss how to control the type I error when $\lambda_{0}+\lambda_{1}< 0$ using a sensitivity analysis approach \citep{rosenbaum2020design}.

%{\blue Here the efficiency is discussed in terms of sample size and power; however, it is practical to consider time as it is another important factor contributing to the cost of a trial. In Section S? of the supplementary materials, we discuss the impact of carry-over effects on the total investigation time and the total trial time.}

%{\blue When the sample size is small, we could also test $H_0:  2^{-1}(\theta_1+\theta_2) = \theta^*$ versus $H_A:  2^{-1}(\theta_1+\theta_2)> \theta^*$ and obtain test statistics that follow some t-distributions; see Section ? of the supplementary materials.}

\section{Covariate adjustment for crossover trials}
\label{sec: covariate adjustment}

Adjusting for prognostic baseline covariates in the analysis of randomized controlled trials is encouraged by regulatory agencies because it has high potential to improve efficiency under approximately the same minimal statistical assumptions that would be needed for unadjusted estimation \citep{fda:2019aa}. It often uses a \textit{working} model between the outcomes and covariates, but its estimand is the same as when using the unadjusted method and its inference does not rely on the working model being correctly specified. 

Covariate adjustment for parallel-group trials has been extensively studied recently. {From the theory of semiparametrics \citep{robins1994estimation, tsiatis2006semiparametric}, \cite{tsiatis2008covariate}  established a general class of consistent and asymptotically normal estimators for the average treatment effect. When linear working models are used, covariate adjustment using}   an analysis of heterogeneous covariance (ANHECOVA) working model that includes all treatment-by-covariate interaction terms can lead to \textit{guaranteed efficiency gain} regardless of the model is misspecified or not \citep{yang2001efficiency, Lin:2013aa, ye2022inference, ye2022toward}. These recent results have not been extended to crossover trials, although covariate adjustment is broadly recommended for crossover trials \citep{metcalfe2010analysis, mehrotra2014recommended, jemielita2016improved}. 

We consider adjusting for a baseline covariate vector  $\bfX_i$ measured before randomization (i.e., pre-randomization covariates) using the following covariate-adjusted ANHECOVA estimator
\begin{align}
	\hat \theta_{\rm cr, adj}=\frac{\big\{ \bar \Delta_1 - \hat\bbeta_1^T (\bar \bfX_1 - \bar \bfX) \big\} - \big\{ \bar \Delta_0 - \hat\bbeta_0^T (\bar \bfX_0 - \bar \bfX) \big\} }{2}  , \nonumber
\end{align}
where $\bar \bfX$ is the sample mean of all $\bfX_i$'s,  $\bar \bfX_a$ is the sample mean of $\bfX_i$'s from subjects with $A_i=a$,  $\Delta_i$ and $\bar\Delta_a$ are defined in \eqref{eq: basic estimator},
and $ \hat\bbeta_a = \left\{\sum_{i: A_i = a} (\bfX_i - \bar \bfX_a) (\bfX_i - \bar \bfX_a)^T   \right\}^{-1} \sum_{i: A_i = a} (\bfX_i - \bar \bfX_a)  \Delta_i $
is the least squares estimator of $\bbeta_a$ from fitting the linear working model $E(\Delta_i \mid A_i=a, \bfX_i) = \mu_a +  \bbeta_a^T \bfX_i   $ using subjects with $A_i=a$.

The following heuristics reveal why ANHECOVA does not change the estimand, often gains but never hurts efficiency even when the linear working model is wrong. As randomization balances the covariate distribution, both $\bar \bfX_a $ and $\bar \bfX$ estimate the same quantity and thus, $\hat\bbeta_a^T (\bar \bfX_a - \bar \bfX)$ is an ``estimator'' of zero. Hence, $\bar \Delta_a - \hat\bbeta_a^T (\bar \bfX_a - \bar \bfX)  $ and $\bar \Delta_a$  correspond to the same estimand. In addition, as $n\rightarrow \infty$, $\hat\bbeta_a$ converges to $\bbeta_a= \var(\bfX_i)^{-1} \cov (\bfX_i, \Delta_i\mid A_i=a )$ in probability, regardless of the linear working model is correct or not. Hence, $\bar \Delta_a - \hat\bbeta_a^T (\bar \bfX_a - \bar \bfX)  $ is asymptotically equivalent to $\bar \Delta_a - \bbeta_a^T (\bar \bfX_a - \bar \bfX)  $, whose variance is
\begin{align*}
	\var\{ \bar \Delta_a - \bbeta_a^T (\bar \bfX_a - \bar \bfX) \} &=    \var( \bar \Delta_a ) + \var \{ \bbeta_a^T (\bar \bfX_a - \bar \bfX)  \} - 2 {\rm cov} \{\bar \Delta_a ,  \bbeta_a^T (\bar \bfX_a - \bar \bfX) \} \\
	&=  \var( \bar \Delta_a ) - \var \{ \bbeta_a^T (\bar \bfX_a - \bar \bfX)  \}. 
\end{align*}
Consequently, the asymptotic variance of $\bar \Delta_a - \hat\bbeta_a^T (\bar \bfX_a - \bar \bfX)  $ is no larger than that of $\bar \Delta_a$. These results are formally stated in Theorem \ref{theo: adj estimator}, which are all in the asymptotic sense.

\begin{theorem}\label{theo: adj estimator}
	Under Assumption 1,\\
	(a)
	$	\sqrt{n} \big\{	\hat \theta_{\rm cr, adj} - 2^{-1}(\theta_1 + \tilde\theta_2 - \lambda_{0} - \lambda_{1} )  \} \xrightarrow{d} N\left( 0, \tilde \sigma^{2}_{\rm cr, adj}  \right)$, where $\tilde\sigma^{2}_{\rm cr, adj}  =  (4\pi_1)^{-1} \var \big(Y_{i1}^{(1)}-Y_{i2}^{(10)} - \bbeta_1^T \bfX_i \big)  +(4\pi_0)^{-1} \var \big(Y_{i1}^{(0)} - Y_{i2}^{(01)} - \bbeta_0^T \bfX_i \big)  + 4^{-1} (\bbeta_1 - \bbeta_0 )^T \var(\bfX) (\bbeta_1 - \bbeta_0 ) $, and $\bbeta_a= \var(\bfX_i)^{-1} \cov (\bfX_i, \Delta_i\mid A_i=a )$. \\
	(b) Moreover, $\tilde \sigma^{2}_{\rm cr, adj}  -\tilde\sigma^{2}_{\rm cr} = - (4\pi_1\pi_0)^{-1} (\pi_0\bbeta_1+ \pi_1 \bbeta_0)^T \var(\bfX) (\pi_0\bbeta_1+ \pi_1 \bbeta_0)\leq 0$. 
\end{theorem}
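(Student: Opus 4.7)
My plan is to derive both parts together by first obtaining an asymptotically linear expansion of $\hat\theta_{\rm cr, adj} - \hat\theta_{\rm cr}$, then applying a joint central limit theorem to $(\hat\theta_{\rm cr}, \bar\bfX_1 - \bar\bfX_0)$, and finally verifying that the resulting asymptotic variance matches the compact form in part (b), which is in turn algebraically equivalent to the three-term form displayed in part (a). The first ingredient is consistency of $\hat\bbeta_a$: an in-arm law of large numbers, combined with Assumption \ref{assump: random} (which makes $\bfX_i$ independent of $A_i$ so that $\var(\bfX_i \mid A_i=a)=\var(\bfX_i)$ and $\cov(\bfX_i,\Delta_i\mid A_i=a)=\var(\bfX_i)\bbeta_a$), gives $\hat\bbeta_a \xrightarrow{p} \bbeta_a$, and the usual CLT argument further yields $\hat\bbeta_a - \bbeta_a = O_p(n^{-1/2})$.

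Using the identities $\bar\bfX_1 - \bar\bfX = (n_0/n)(\bar\bfX_1 - \bar\bfX_0)$ and $\bar\bfX_0 - \bar\bfX = -(n_1/n)(\bar\bfX_1 - \bar\bfX_0)$, the convergence $n_a/n \xrightarrow{p} \pi_a$, and $\bar\bfX_a - \bar\bfX = O_p(n^{-1/2})$, the consistency of $\hat\bbeta_a$ yields the asymptotically linear expansion
\[
\sqrt n\big(\hat\theta_{\rm cr, adj} - \hat\theta_{\rm cr}\big) = -\tfrac12\big[\hat\bbeta_1^T \sqrt n(\bar\bfX_1 - \bar\bfX) - \hat\bbeta_0^T \sqrt n(\bar\bfX_0 - \bar\bfX)\big] = -\tfrac12\, b^T \sqrt n(\bar\bfX_1 - \bar\bfX_0) + o_p(1),
\]
where $b := \pi_0\bbeta_1 + \pi_1\bbeta_0$. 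Combined with Theorem \ref{theo: basic estimator violate} and a multivariate CLT applied within each arm (the two arms being independent given $n_1, n_0$), the joint limit of $\sqrt n\{\hat\theta_{\rm cr} - 2^{-1}(\theta_1+\tilde\theta_2-\lambda_{0}-\lambda_{1})\}$ and $\sqrt n(\bar\bfX_1 - \bar\bfX_0)$ is centered Gaussian with variances $\tilde\sigma^2_{\rm cr}$ and $\var(\bfX_i)/(\pi_1\pi_0)$ and cross-covariance $\tfrac{1}{2\pi_1\pi_0}\var(\bfX_i)\,b$. Substituting into the expansion gives
\[
\tilde\sigma^2_{\rm cr, adj} = \tilde\sigma^2_{\rm cr} + \tfrac{1}{4\pi_1\pi_0} b^T\var(\bfX_i) b - \tfrac{1}{2\pi_1\pi_0} b^T\var(\bfX_i) b = \tilde\sigma^2_{\rm cr} - \tfrac{1}{4\pi_1\pi_0}(\pi_0\bbeta_1+\pi_1\bbeta_0)^T\var(\bfX_i)(\pi_0\bbeta_1+\pi_1\bbeta_0),
\]
which is exactly the claim in (b). To recover the form of $\tilde\sigma^2_{\rm cr, adj}$ displayed in (a), I will use the identity $\var(\Delta_i - \bbeta_a^T\bfX_i\mid A_i=a) = \var(\Delta_i\mid A_i=a) - \bbeta_a^T\var(\bfX_i)\bbeta_a$ to expand the three-term expression and verify it coincides with the compact form above via the elementary identity $(\pi_0\bbeta_1 + \pi_1\bbeta_0)^T V(\pi_0\bbeta_1 + \pi_1\bbeta_0) = \pi_0^2\bbeta_1^T V\bbeta_1 + 2\pi_1\pi_0\bbeta_1^T V\bbeta_0 + \pi_1^2\bbeta_0^T V\bbeta_0$ with $V=\var(\bfX_i)$.

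The main obstacle is the asymptotic linear expansion in the second paragraph: one has to check that the remainder from replacing $\hat\bbeta_a$ with $\bbeta_a$ is genuinely $o_p(n^{-1/2})$, which follows because $(\hat\bbeta_a - \bbeta_a)^T(\bar\bfX_a - \bar\bfX) = O_p(n^{-1})$ as both factors are $O_p(n^{-1/2})$. Once the linear representation is available, (b) follows immediately from the joint CLT, and the equivalence to the form in (a) is a routine algebraic rearrangement.
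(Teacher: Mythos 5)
Your argument is correct, and it reaches the stated conclusion by a route different from the paper's. The paper does not give a self-contained proof of this theorem at all: it simply invokes Theorem 1 and Corollary 1 of \cite{ye2022toward}, applied with the within-subject difference $\Delta_i$ playing the role of the outcome, which delivers the asymptotic normality, the variance formula, and in fact the stronger optimality claim that $\hat\theta_{\rm cr,adj}$ dominates every linearly adjusted estimator of the same form. What you do instead is reprove the needed special case from scratch: you use $\bar\bfX_1-\bar\bfX=(n_0/n)(\bar\bfX_1-\bar\bfX_0)$ and $\bar\bfX_0-\bar\bfX=-(n_1/n)(\bar\bfX_1-\bar\bfX_0)$ together with $\hat\bbeta_a\xrightarrow{p}\bbeta_a$ to get the asymptotically linear representation $\sqrt n(\hat\theta_{\rm cr,adj}-\hat\theta_{\rm cr})=-\tfrac12 b^T\sqrt n(\bar\bfX_1-\bar\bfX_0)+o_p(1)$ with $b=\pi_0\bbeta_1+\pi_1\bbeta_0$, then close with a joint CLT. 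I verified your cross-covariance $\tfrac{1}{2\pi_1\pi_0}\var(\bfX_i)b$ and the resulting cancellation giving $-\tfrac{1}{4\pi_1\pi_0}b^T\var(\bfX_i)b$, as well as the algebraic identity reconciling this compact form with the three-term expression in part (a) via $\var(\Delta_i-\bbeta_a^T\bfX_i\mid A_i=a)=\var(\Delta_i\mid A_i=a)-\bbeta_a^T\var(\bfX_i)\bbeta_a$; all check out (consistency of $\hat\bbeta_a$ alone already suffices for the remainder bound, so your $O_p(n^{-1/2})$ rate is more than needed). The trade-off is the usual one: the paper's citation is shorter and buys the full optimality result among linearly adjusted estimators essentially for free, whereas your direct projection-plus-joint-CLT derivation is self-contained, makes explicit why the efficiency gain is exactly a quadratic form in $\pi_0\bbeta_1+\pi_1\bbeta_0$, and parallels the conditioning-on-$(A_1,\dots,A_n)$ technique the paper already uses for Theorems 1 and 2.
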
 

Theorem \ref{theo: adj estimator} is proved by applying Theorem 1 and Corollary 1 in \cite{ye2022toward} with $\Delta_i$ as the outcome.  From Theorem \ref{theo: adj estimator}(b), we see that the asymptotic variance of $\hat \theta_{\rm cr, adj}$ is no larger than that of $\hat \theta_{\rm cr}$, where the equality holds if and only if $\pi_0\bbeta_1+ \pi_1 \bbeta_0=0$.  
This occurs, for example, when $\bbeta_0=\bbeta_1=0$, i.e., when the covariates are uncorrelated with the change in outcome $\Delta_i$. In fact, Theorem 1 of \cite{ye2022toward} implies a stronger result that $	\hat \theta_{\rm cr, adj}$ has the smallest asymptotic variance among all linearly adjusted estimators of the form $2^{-1} \big[\big\{ \bar \Delta_1 - \boldsymbol{b}_1^T (\bar \bfX_1 - \bar \bfX) \big\} - \big\{ \bar \Delta_0 - \boldsymbol{b}_0^T (\bar \bfX_0 - \bar \bfX) \big\} \big]$, where $\boldsymbol{b}_0, \boldsymbol{b}_1$ are any fixed or random vectors that have the same dimension as $\bfX_i$. A consistent estimator of  $\tilde \sigma^{2}_{\rm cr, adj}$ is 
\begin{align*}
	\hat \sigma_{\rm cr, adj}^2 = \frac{1}{4\pi_1} S_{\Delta1, {\rm adj}}^2 + \frac{1}{4\pi_0} S_{\Delta0, {\rm adj}}^2 + \frac14 \big(\hat\bbeta_1- \hat\bbeta_0 \big)^T \hat\Sigma_X \big(\hat\bbeta_1- \hat\bbeta_0 \big),
\end{align*}
where $S_{\Delta a, {\rm adj}}^2$ is the sample variance of $Y_{i1}- Y_{i2} - \hat\bbeta_a^T \bfX_i$ based on subjects under $A_i=a$, for $a=0,1$, and $\hat \Sigma_X$ is the sample covariance matrix of $\bfX_i$ based on the entire sample. One can easily construct a $Z$-test based on the covariate-adjusted estimator  $\hat \theta_{\rm cr, adj} $, which from Theorem \ref{theo: adj estimator} is guaranteed to be more powerful than the unadjusted counterpart $T_{\rm cr}$ in the asymptotic sense.

{In this article, we have focused on adjusting for pre-randomization covariates. As discussed above, if the covariates are related to the change in outcome $\Delta_i$, which could happen, for example, when there exist treatment-by-covariates interactions, then adjusting for pre-randomization covariates can reduce the variability of $\Delta_i$ and improve asymptotic efficiency. This is also shown in our simulation study in Section 5. However, in crossover trials, this adjustment typically yields small to modest benefits unless the covariates strongly influence the change in outcome. Previous studies have explored adjusting for period-dependent baseline covariates, i.e., covariates that are measured before treatment is given within each period \citep{kenward2010use, jones2014design}. Yet, one important caveat is that the washout period needs to be sufficiently long to ensure that the period-dependent baselines are not influenced by previous treatment (i.e., by carry-over). {In instances where the carry-over effect is a concern, more sophisticated methods to adjust for period-dependent
		covariates should be used (e.g. the longitudinal G methods \citep[Chapter 21]{hernanrobins}). Exploring these methods will be a direction for future research.
}}

\vspace{-5mm}
\section{Power calculations}
\label{sec: power}
To compare the power of parallel-group and crossover design, we consider two situations. In Case I, we consider the following simple data-generating process for which we can calculate the power using both the formula and simulations:
\begin{align*}
	&Y_{i1}^{(0)}= X_{i1}+ X_{i2}+ X_{i3} + \epsilon_{i1}\\
	&Y_{i1}^{(1)}= \theta_1+ X_{i1}+ X_{i2}+  X_{i3} + \epsilon_{i2}\\
	&Y_{i2}^{(10)}= \tilde \tau+ \lambda_{0}+ X_{i1}+ X_{i2}+ b  X_{i3} +\epsilon_{i3} \\
	&Y_{i2}^{(01)}= \tilde\tau + \tilde \theta_2  - \lambda_{1} + X_{i1}+ X_{i2}+ b  X_{i3}+ \epsilon_{i4} .
	%&Y_{i2}^{(00)}= \tilde\tau+ X_{i1}+ X_{i2}+ bX_{i3} +\epsilon_{i5}\\
	%&Y_{i2}^{(11)}= \tilde\tau + \tilde \theta_2 + X_{i1}+ X_{i2}+ b  X_{i3}+ \epsilon_{i6}.
\end{align*}
In {Case II}, we consider a more complex data-generating process of the potential outcomes with treatment-by-covariate interactions:
\begin{align*}
	&Y_{i1}^{(0)}= X_{i1}+X_{i2}+X_{i3}  + \epsilon_{i1}\\
	&Y_{i1}^{(1)}=X_{i1}+ X_{i2}+ 2X_{i3}+ \theta_1 (0.5+ I_{(X_{i2}>0)}+ X_{i1}X_{i3}) + \epsilon_{i2}\\ 
	&Y_{i2}^{(10)}= X_{i1}+ X_{i2}+ X_{i3}+ \tilde\tau  (1+X_{i1}) + \lambda_0 (1+ X_{i2}X_{i3})+ \epsilon_{i3}\\
	&Y_{i2}^{(01)}= X_{i1}+ X_{i2}+ 2X_{i3}+ \tilde\tau (1+ X_{i1})+ \tilde\theta_2 ( 1+ 2X_{i1}X_{i3} )  -\lambda_1 (1+ X_{i2}X_{i3}) +  \epsilon_{i4}. 
	%&Y_{i2}^{(00)}=  X_{i1}+X_{i2}+X_{i3} + \tilde\tau  (1+0.5 X_{i1})+ \epsilon_{i5}\\
	%&Y_{i2}^{(11)}= X_{i1}+X_{i2}+X_{i3}+ \tilde\tau (1+ 0.5 X_{i1}) + \tilde\theta_2 ( 1+ 0.5X_{i1}X_{i3} )   + \epsilon_{i6}
\end{align*}
In both cases, $X_{ij},\epsilon_{ik}  \sim N(0,1)$ for $j=1,2,3$ and $k=1,2,3,4$.  The treatment arm indicator $A_i$ is Bernoulli with $\pi_0=\pi_1=1/2$. The observed outcomes for subject $i$ are $Y_{i1} = Y_{i1}^{(1)}, Y_{i2} = Y_{i2}^{(10)}$ if $A_i=1$, and $Y_{i1} = Y_{i1}^{(0)}, Y_{i2} = Y_{i2}^{(01)}$ if $A_i=0$. The observed data are $(\bfX_i, A_i, Y_{i1}, Y_{i2}), i=1,\dots, n$.  For the covariate-adjusted estimator, we adjust for $\bfX_i = ( X_{i1}, X_{i2}, X_{i3} )^T$. We set  $n=500, \tilde \tau=0.2, \alpha=0.025$, $\theta^*=0$, and $b\in \{0,\frac13\}$.

In Case I, because of the simple data-generating process, we can directly calculate that $\var \big( Y_{i1}^{(0)} \big)= \var \big( Y_{i1}^{(1)} \big)=4$,
$\var \big( Y_{i2}^{(10)} \big)= \var \big( Y_{i2}^{(01)} \big)= b^2 +3$, 
$\mathrm{Cov} \big( Y_{i1}^{(1)}, Y_{i2}^{(10)} \big)$ $=\mathrm{Cov} \big( Y_{i1}^{(0)}, Y_{i2}^{(01)} \big)= b+2 $, and thus $\sigma_{\rm pr}^2 = 16$ and $\tilde \sigma_{\mathrm{cr}}^2= (1-b)^2 +2$.  In this case, $\var(\bfX_i)$ is a $3\times3$ identity matrix and $\bbeta_0=\bbeta_1 =( 0, 0 ,1-b )^T$. Thus, $\tilde \sigma_{\mathrm{cr,adj}}^2= 2$. From Corollary 1 in \cite{ye2022toward}, it is also easy to calculate the asymptotic variance of the covariate-adjusted ANHECOVA estimator using only the parallel-group design as  $\sigma_{\rm pr, adj}^2 = 4$. Hence, we calculate the power based on both formula and simulations. Figure \ref{fig: power formula case1} shows the type I error rate and power for four one-sided tests $T_{\rm pr}, T_{\rm pr,adj}, T_{\rm cr}, T_{\rm cr, adj}$ under Case I when $\theta_1= \tilde\theta_2= \theta \in [0,0.5]$ and $\lambda_0=\lambda_1=\lambda\in \{-0.1, 0, 0.1, 0.3\}$ based on the formula in \eqref{eq: power pr}, \eqref{eq: power cr general} and its covariate-adjusted counterpart. Note that $T_{\rm pr, adj}$ is based on the ANHECOVA estimator that takes the same form as $\hat\theta_{\mathrm{cr, adj}}$ but with $\Delta_i$ replaced by $Y_{i1}$.  In this setting, the value of $b$ only affects the power of $T_{\rm cr}$ but not the other two tests, so we present $b=\frac13$ for $T_{\rm cr}$ only. In the supplementary materials, we present the type I error rate and power obtained by simulation, which is shown to agree with the power in Figure \ref{fig: power formula case1}.

\begin{figure}[h]
	\centering
	\includegraphics[width=\textwidth]{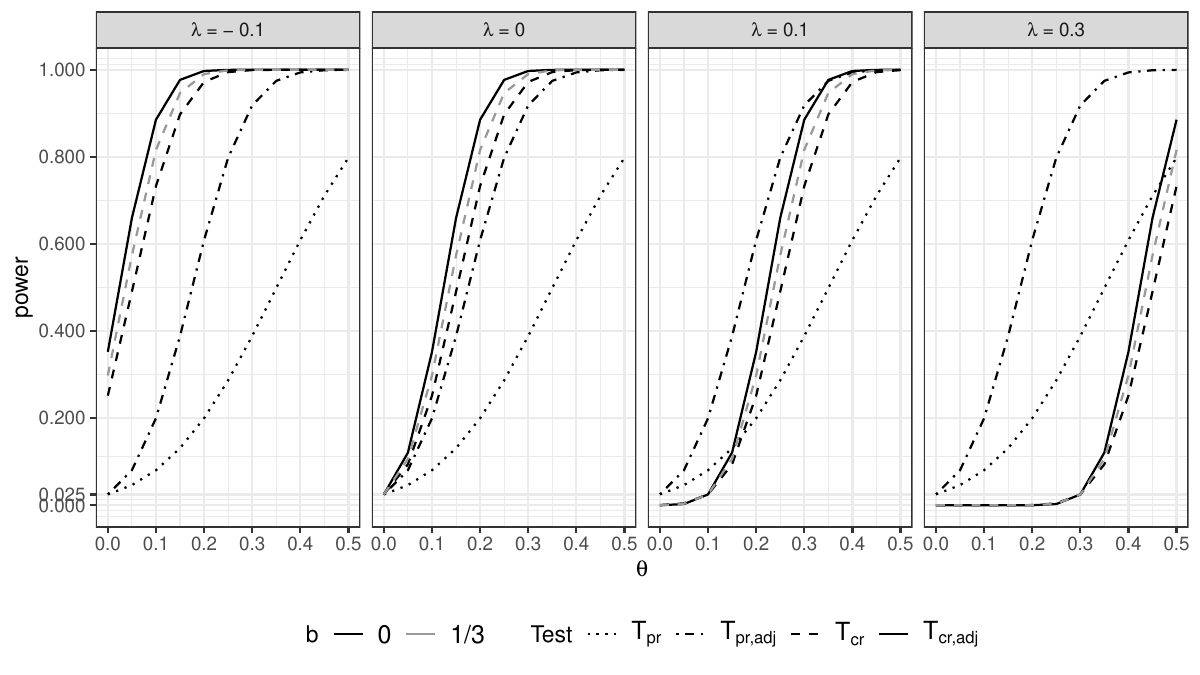}
	\caption{Power curves for four tests under Case I calculated by formula when $\lambda =-0.1, 0, 0.1, 0.3$ and $b=0, \frac13$ under Case I.
		Note that the power of $T_{\rm pr}$, $T_{\rm pr,adj}$, and $T_{\rm cr, adj}$ are unaffected by $b$.}\label{fig: power formula case1}
\end{figure}

In Figure \ref{fig: power formula case1}, the power of $T_{\rm pr}$ and $T_{\rm pr, adj}$ are not affected by $\lambda$ and can be used to benchmark the performance of the other tests. When $\theta=0$, i.e., under the null hypothesis of no treatment effect, the type I error rates of $T_{\rm pr}$ and $T_{\rm pr,adj}$ are equal to $\alpha=0.025$. The type I error rates of $T_{\rm cr}$ and $T_{\rm cr, adj}$ are greater than $\alpha$ when $\lambda<0$, equal to  $\alpha$ when $\lambda=0$, and smaller than $\alpha$ when $\lambda>0$. In other words, $T_{\rm cr}$ and $T_{\rm cr, adj}$ can control the type I error rate when $\lambda\geq 0$; meanwhile, $T_{\rm cr}$ and $T_{\rm cr, adj}$ become more and more conservative as $\lambda$ grows. 

When $\theta>0$, i.e., under the alternative hypothesis, the power comparison of $T_{\rm cr}$ and $T_{\rm pr}$ depends on the sign of $\lambda - \theta\left(1- \sqrt{\frac{2-b}{8}}\right)$ following the calculations at the end of Section \ref{subsec: type I error and power with carry over}. Specifically,  with $b=0$,  $T_{\rm  cr}$ is more powerful than $T_{\rm pr}$ when $\lambda<0.5 \theta$. Hence, with no carry-over effect ($\lambda=0$), we see that $T_{\rm cr}$ has substantial power gain compared to $T_{\rm pr}$; with a small carry-over effect ($\lambda=0.1$),  $T_{\rm cr}$ is more powerful than $T_{\rm pr}$ when $\theta>0.2$; with a large carry-over effect ($\lambda=0.3$), $T_{\rm pr}$ is more powerful than $T_{\rm cr}$ for $0<\theta\leq 0.5$. Furthermore, since a larger $b$ leads to a larger ICC, it also results in a slightly higher power of $T_{\rm cr}$. Lastly, adjusting for covariates that are related to the change in outcome can always increase power regardless of whether the model is correct or not. Hence, we see that the covariate-adjusted test $T_{\rm cr, adj}$ is always more powerful than the unadjusted test $T_{\rm cr}$; in some cases, the power gain can be up to 17\%.

Figure \ref{fig: power case2} shows the empirical type I error rate and power for Case II obtained by simulation from 10,000 repetitions. The results are similar to the results from Figure \ref{fig: power case2}, and covariate adjustment can improve the power when there exist treatment-by-covariates interactions.

\begin{figure}[h]
	\centering
	\includegraphics[width=\textwidth]{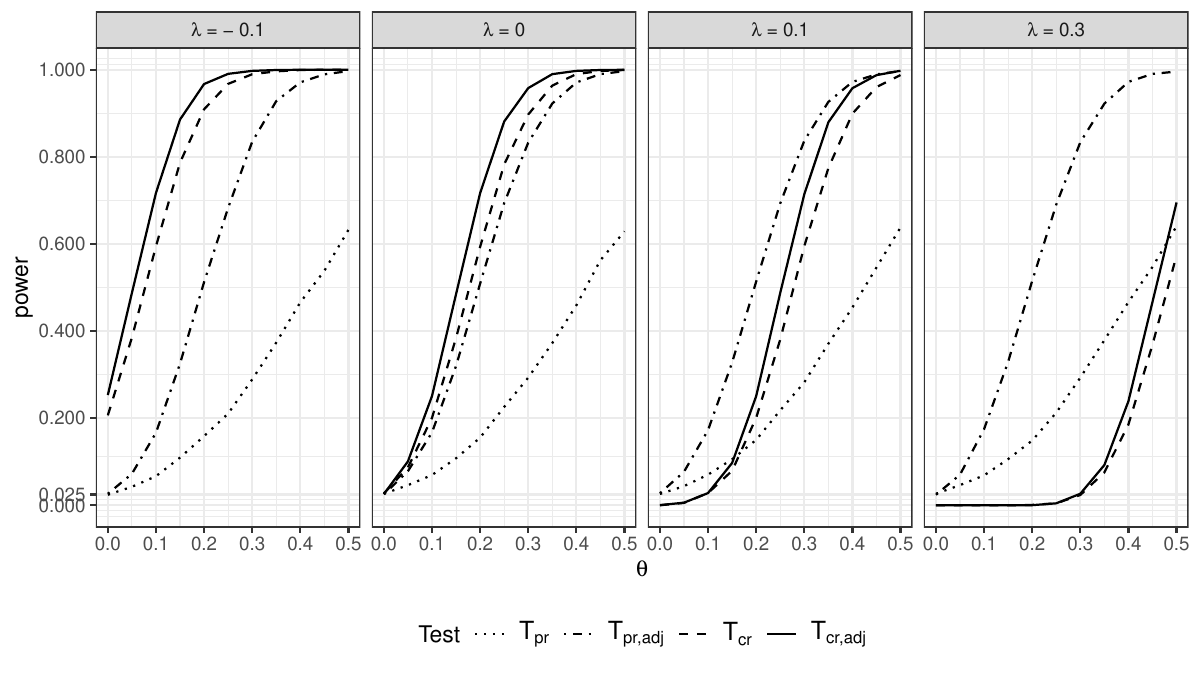}
	\caption{Power curves for four tests under Case II calculated by simulations when $\lambda =-0.1, 0, 0.1, 0.3$.}\label{fig: power case2}
\end{figure}

%The second data-generating process follows a logistic regression model where the potential outcomes are $Z_{i1}^{(j)} \sim \text{Bernoulli} \left( p_{i1}^{(j)} \right), Z_{i2}^{(jk)} \sim \text{Bernoulli} \left( p_{i2}^{(jk)} \right)$ with
%\begin{align*}
%\log \left( \frac{ p_{i1}^{(j)}}{ 1-p_{i1}^{(j)}} \right) =Y_{i1}^{(j)};\; \log \left( \frac{ p_{i2}^{(jk)}}{ 1-p_{i2}^{(jk)}} \right) =Y_{i2}^{(jk)}, \, j,k=1,0.
%\end{align*}
%The treatment effect $\theta_{\text{log}}= \{\}$ and the carry-over effect $\lambda{\text{log}}= \{\}$ are too estimated via simulations under the varying $\gamma_1$ and $\ell$. 
%In both Figure \ref{fig:power_continuous} and Figure \ref{fig:power_binary}, the powers of $T_{\text{pr}}, T_{\text{pr,adj}}$ are not affected by the carry-over effects and can be used to benchmark the performance of the two crossover tests. Under the null hypothesis of no treatment effect, i.e., $\theta=0$, the empirical type I error rates of $T_{\text{pr}}$ and $T_{\text{pr,adj}}$ are approximately the same as $\alpha=0.05$. When $\lambda>0$, the empirical type I error rates of $T_{\text{cr}}$ and $T_{\text{cr,adj}}$ are greater than $\alpha$; when $\lambda=0$, they are approximately equal to $\alpha$; when $\lambda>0$, they are smaller than $\alpha$. Hence, $T_{\text{cr}}$ and $T_{\text{cr,adj}}$ can control the type I error rate only when $\lambda\ge 0$ and become more and more conservative as $\lambda$ increases. 

\vspace{-5mm}
\section{Real Data Example}
\label{sec: mtn034}
\subsection{Application to the REACH study}
In this section, we revisit the REACH study \citep{nair2023adherence} described in Section \ref{sec: intro} and apply our methods to study the adherence difference between monthly  DVR and daily oral TDF/FTC.  

In the REACH study, a total of 247 participants were enrolled and randomized (1:1) to two treatment arms of product use: using the DVR for 6 months and switching to daily oral TDF/FTC for a second 6 months, or using the daily oral TDF/FTC for 6 months and then switching to DVR for 6 months. Participants’ baseline characteristics are summarized in  Table \ref{tb:MTN034}.

\begin{table}[h]
	\caption{Baseline characteristics by randomized crossover sequence in the REACH study. Data are presented as median (IQR) or $n$ (\%).  
	} \label{tb:MTN034}
	\centering
	\begin{tabular}{lccc}
		\toprule
		& \textbf{DVR first}   & \textbf{TDF/FTC first}   & \textbf{Overall}      \\
		& $(n=124)$ &$(n=123)$ & (n=247) \\ \hline 
		Age in years                        \\
		\quad Median (IQR)    & 18 (17-19)                                                                          & 18 (17-19)                                                                          & 18 (17-19)   \\
		Site location                                       &                                                                                      &                                                                                      &              \\
		\quad South Africa - Cape Town                            & 30 (24.2\%)                                                                          & 30 (24.4\%)                                                                          & 60 (24.3\%)  \\
		\quad South Africa - Johannesburg                         & 34 (27.4\%)                                                                          & 33 (26.8\%)                                                                          & 67 (27.1\%)  \\
		\quad Uganda - Kampala                                    & 30 (24.2\%)                                                                          & 30 (24.4\%)                                                                          & 60 (24.3\%)  \\
		\quad Zimbabwe - Harare                                   & 30 (24.2\%)                                                                          & 30 (24.4\%)                                                                          & 60 (24.3\%)  \\
		Marital status                                      &                                                                                      &                                                                                      &              \\
		\quad Single                                              & 106 (85.5\%)                                                                         & 108 (87.8\%)                                                                         & 214 (86.6\%) \\
		\quad Married/cohabiting                                  & 16 (12.9\%)                                                                          & 14 (11.4\%)                                                                          & 30 (12.1\%)  \\
		\quad Separated/divorced                                  & 2 (1.6\%)                                                                            & 1 (0.8\%)                                                                            & 3 (1.2\%)    \\
		Highest level of education                    &                                                                                      &                                                                                      &              \\
		\quad Primary                                             & 15 (12.3\%)                                                                          & 18 (14.6\%)                                                                          & 33 (13.5\%)  \\
		\quad Secondary                                           & 92 (75.4\%)                                                                          & 97 (78.9\%)                                                                          & 189 (77.1\%) \\
		\quad Higher                                              & 15 (12.3\%)                                                                          & 8 (6.5\%)                                                                            & 23 (9.4\%)   \\
		Earns own income                                    & 22 (17.7\%)                                                                          & 31 (25.2\%)                                                                          & 53 (21.5\%)  \\
		Ever pregnant                              & 52 (41.9\%)                                                                          & 47 (38.2\%)                                                                          & 99 (40.1\%)  \\
		Currently has a sex partner                         & 111 (89.5\%)                                                                         & 108 (87.8\%)                                                                         & 219 (88.7\%) \\
		Worry about HIV infection                           & 86 (39.4\%)                                                                          & 82 (66.7\%)                                                                          & 168 (68.0\%) \\
		Diagnosed with STI & 40 (32.3\%)                                                                          & 47 (38.2\%)                                                                          & 87 (35.2\%)  \\
		\quad Syphilis                                            & 3 (2.4\%)                                                                            & 3 (2.4\%)                                                                            & 6 (2.4\%)    \\
		\quad Trichomoniasis                                      & 8 (6.5\%)                                                                            & 5 (4.1\%)                                                                            & 13 (5.3\%)   \\
		\quad Gonorrhea                                           & 13 (10.5\%)                                                                          & 8 (6.5\%)                                                                            & 21 (8.5\%)   \\
		\quad Chlamydia                                           & 30 (24.2\%)                                                                          & 41 (33.3\%)                                                                          & 71 (28.7\%)  \\
		CES depression scale                                          &                                                                                      &                                                                                      &              \\
		\quad  $\geq 12$  (indicative of depression)       & 28 (22.6\%)                                                                          & 38 (30.9\%)                                                                          & 66 (26.7\%)  \\
		\quad $< 12$          & 89 (71.8\%)                                                                          & 78 (63.4\%)                                                                          & 167 (67.6\%) \\
		\quad Missing                                             & 7 (5.6\%)                                                                            & 7 (5.7\%)                                                                            & 14 (5.7\%)   \\
		Alcohol use disorder (AUDIT-C score)                             &                                                                                      &                                                                                      &              \\
		\quad $\geq 3$ (indicative of alcohol use disorder)              & 50 (40.3\%)                                                                          & 47 (38.2\%)                                                                          & 97 (39.3\%)  \\
		\quad $< 3$                      & 74 (59.7\%)                                                                          & 75 (61.0\%)                                                                          & 149 (60.3\%)  \\
		\quad Missing                                             & 0 (0.0\%)                                                                          & 1 (0.8\%)                                                                          & 1 (0.4\%) \\
		Preference for products                             &                                                                                      &                                                                                      &              \\
		\quad  DVR                                       & 51 (41.1\%)                                                                          & 43 (35.0\%)                                                                          & 94 (38.1\%)  \\
		\quad oral TDF/FTC                                     & 43 (34.7\%)                                                                          & 57 (46.3\%)                                                                          & 100 (40.5\%) \\
		\quad Equal preference                                    & 29 (23.4\%)                                                                          & 22 (17.9\%)                                                                          & 51 (20.6\%)  \\
		\quad Did not answer                                      & 1 (0.8\%)                                                                            & 1 (0.8\%)                                                                            & 2 (0.8\%)   \\
		\bottomrule
		\multicolumn{4}{l}{\small IQR=Interquartile range; STI=sexually transmitted infection; CES=Center for Epidemologic Studies;} \\
		\multicolumn{4}{l}{\small AUDIT-C= Alcohol Use Disorders Identification Test-Concise.}
	\end{tabular}
\end{table}

{We apply our method to compare adherence to monthly DVR and daily oral TDF/FTC. The binary adherence endpoint is defined as high use at the end of period 1 and period 2, with high use of daily oral TDF/FTC defined as  tenofovir-diphosphate concentrations greater
	than or equal to 700 fmol/punch (associated with taking an average of four or more tablets per week in the
	previous month), and high use of monthly  DVR  defined as greater than or equal to 4 mg dapivirine released from the returned ring (continuous use for
	28 days in the previous month).
	These adherence definitions are widely accepted in the literature \citep{nair2023adherence}. It is important to note, however, that the definitions require continuous daily usage of the ring, but not for oral TDF/FTC. For covariate adjustment, we consider baseline covariates that are likely to be associated with adherence, including age, site location, preference for either product, depression, alcohol use disorder, sexually transmitted infection, and participants’ level of worry about HIV infection. We use the single imputation method to handle the missing baseline covariates \citep{zhao2022adjust}. %Specifically, we utilize the median score for each question in the CES-D and AUDIT-C questionnaires within each site to impute the missing values in depression and alcohol use disorders. We also impute the 2 missing values of baseline preferences for products as ``equal preference."  
	
	{To discuss the impact of potential carry-over effects in the analysis of this trial, we first note that for the  DVR first arm, using unadjusted sample means, the estimated adherence rate to  DVR at period 1 was 55.6\% and to TDF/FTC was 44.4\% at period 2; for the TDF/FTC first arm, the estimated adherence rate to TDF/FTC at period 1 was 48.0\% and 
		to DVR was 52.8\% at period 2. Hence, the average adherence difference at period 1 is 55.6\%-48.0\%=7.6\%, and average adherence difference at period 2 is 52.8\%-44.4\%= 8.4\%. This small difference between the treatment effect at period 1 and period 2 may be attributed to a larger treatment effect in period 2, and/or a negative carry-over effect ($\lambda_1$) for the reason discussed in Section 1. However, because the difference is small, even if the carry-over effect exists, it is unlikely to substantially alter the result. 
	}

	Table \ref{tb: MTN034result} presents both the parallel and crossover estimates and their covariate-adjusted counterparts, alongside the standard errors and 95\% confidence intervals. The parallel estimates use data from period 1 only. Estimates from all methods are similar, indicating that adherence to  DVR is about 7.7\%-8.5\% higher compared to daily oral TDF/FTC.  The SEs of the crossover estimators are smaller compared to the SEs of the parallel estimators, demonstrating that the crossover design is more efficient. Adjusting for covariates leads to slightly larger estimates and smaller standard errors. {All the 95\% confidence intervals (except for the covariate-adjusted crossover one) cover 0, suggesting that the difference is not statistically significant.}

	\begin{table}[h]
		\caption{Covariates-adjusted and unadjusted parallel and crossover estimators for the average treatment effect of DVR on adherence compared to daily oral TDF/FTC, with standard errors and 95\% confidence intervals.} \label{tb: MTN034result}
		\centering
		\begin{tabular}{llll}
			\toprule
			Type & Mean & SE & 95\% confidence interval\\
			\hline 
			Parallel, unadjusted & 0.077 & 0.064 & $(-0.048, 0.202)$\\
			Parallel, adjusted & 0.082 & 0.062 & $(-0.040, 0.204)$\\
			Crossover, unadjusted & 0.081 & 0.043 & $(-0.003, 0.165)$\\
			Crossover, adjusted & 0.085 & 0.042 & $(0.003, 0.168)$\\
			\bottomrule
		\end{tabular}
	\end{table}

	\subsection{Using the REACH study to design a hypothetical trial}
	Despite being highly effective for HIV prevention, adherence to daily oral TDF/FTC is low among women \citep{celum2019prep}. The dual prevention pill (DPP), a daily oral pill combining oral contraceptives and TDF/FTC, has the potential to increase women's adherence to daily oral TDF/FTC \citep{friedland2021}. In this section, using data from the REACH study, we use simulations to evaluate the power trade-off between the crossover design and parallel-group design in a hypothetical trial comparing adherence to a single daily DPP versus adherence to two daily pills (one pill of TDF/FTC and one pill of oral contraceptive). The central hypothesis is that a DPP regime can increase women’s adherence to TDF/FTC.

	Suppose adolescent girls and young women with baseline characteristics similar to those in the REACH study are recruited for this trial, and we use the same design as described in Section 6.1 to compare DPP and two daily pills. In this example, we anticipate a nonnegative carry-over effect primarily because the DPP is expected to increase adherence in the first period, and those who adhere consistently in the first period tend to develop a routine or habit around taking daily oral pills, which may positively influence their adherence in the second period. From the results in Section 3, using a crossover design with the nonnegative carry-over effect does not lead to type I error inflation.

	Our simulations are based on the data of 123 participants who took TDF/FTC first in the REACH study. In the simulation study, we set $\pi_1=\pi_0= 0.5$, $\tilde\tau=-0.05$, $\theta_1= \tilde\theta_2= \theta$ where $\theta\in \{0.05, 0.08, 0.10, 0.15\}$ for different treatment effects. The ICC (i.e., $\rho$) is estimated to be $\rho=0.39$ from the 123 participants in the REACH study. Under the simple scenario described at the end of Section 3, when $\frac{ 2^{-1}(\lambda_0+ \lambda_1)}{ \theta}> 1-\sqrt{(1-0.39)/2} \approx 0.447$, the crossover design is less powerful than the parallel one. So here we set $\lambda_1= \lambda_0= \lambda \in\{0, 0.25\theta, 0.45\theta\} $ for different carry-over effects. We consider four test statistics: $T_{\rm pr}, T_{\rm pr, adj}, T_{\rm cr}, T_{\rm cr, adj}$.  
	The simulation process involves the following steps:
	\begin{enumerate}
		% 1
		\item Denote the baseline covariates and the adherence variable at week 24 from the REACH study as $\big(\bfX_i, Y_{i1}^{(0)} \big), i=1\dots, N$, with $N=123$. The average of $Y_{i1}^{(0)}$ is 0.48.
		Fit a logistic regression model for the probability of adherence to TDF/FTC using age, site location, STI status, depression, alcohol use disorder, worry about HIV infection and baseline preferences for the products. Single imputation is applied for all missing values as described in Section 6.1. 
		Denote the fitted model as $\hat\mu_1(\boldsymbol{x}) = \mathrm{expit} \big( \hat \alpha_{0}+ \hat \bbeta_1^T \boldsymbol{x} \big)$, where $\mathrm{expit} (x)= \frac{\exp(x)}{1+\exp(x)}$.
		
		% 2
		\item Generate $Y_{i1}^{(1)}\sim \mathrm{Bernoulli} \big(\mathrm{expit} \big(\hat \alpha_{1}+ \hat \bbeta_1^T \bfX_i \big) \big)$, where  $\hat\alpha_1$ satisfies $n^{-1}\sum_{i=1}^n \mathrm{expit} (\hat \alpha_{1}+ \hat \bbeta_1^T \bfX_i) - n^{-1}\sum_{i=1}^n \mathrm{expit}  \big(\hat \alpha_{0}+ \hat \bbeta_1^T \bfX_i\big) = \theta$.
		
		% 3
		\item Generate $Y_{i2}^{(01)}$ from $Y_{i2}^{(01)} |Y_{i1}^{(0)} =1 \sim \text{Bernoulli} \left( \frac{s_0}{0.48} \right)$ and $ Y_{i2}^{(01)} |Y_{i1}^{(0)} =0 \sim \text{Bernoulli} \left( \frac{0.48+ \tilde\tau +\theta -\lambda -s_0}{ 1-0.48} \right)$, where $s_0$ is chosen (as a function of $\tilde\tau, \theta, \lambda$)  to ensure that  $\corr (Y_{i2}^{(01)}, Y_{i1}^{(0)})= \rho$ and $E\big(Y_{i2}^{(01)})- E(Y_{i1}^{(0)}\big)= \tilde\tau+ \theta -\lambda$; see details in Section S3.3 of the supplementary materials.
		%Similarly, we generate $Y_{i2}^{(00)}$ such that $E(Y_{i2}^{(00)})= E(Y_{i1}^{(0)})+ \tilde\tau$ and $\corr (Y_{i2}^{(00)}, Y_{i1}^{(0)})= \rho$.
		
		% 4
		\item Generate $Y_{i2}^{(10)}$ from $Y_{i2}^{(10)}| Y_{i1}^{(1)}=1 \sim \text{Bernoulli} \left( \frac{s_1}{0.48+\theta} \right)$ and $Y_{i2}^{(10)}| Y_{i1}^{(1)}=0 \sim  \text{Bernoulli}\left( \frac{0.48+ \tilde\tau+ \lambda -s_1}{1-0.48-\theta} \right)$, where $s_1$ is chosen  (as a function of $\tilde\tau, \theta, \lambda$) to ensure that  $E(Y_{i2}^{(10)})= 0.48 + \tilde\tau +\lambda $ and $\corr(Y_{i2}^{(10)}, Y_{i1}^{(1)})= \rho$.
		%Similarly, we generate $Y_{i2}^{(11)}$ such that $E(Y_{i2}^{(11)} )= 0.48+\theta + \tilde\tau$ and $\corr (Y_{i2}^{(11)}, Y_{i1}^{(1)})= \rho$.
		By now, we have obtained the covariates and 4 potential outcomes $\big( \bfX_i, Y_{i1}^{(0)}, Y_{i1}^{(1)}, Y_{i2}^{(01)}, Y_{i2}^{(10)} \big)$ for 123 individuals.
		
		% 5
		\item A random sample of size $n$ is drawn from the 123 individuals' covariates and potential outcomes with replacement. Each of the obtained $n$ subjects is assigned randomly to $A_i=0$ or 1 with equal probability. For $A_{i}=1$, $ Y_{i1} = Y_{i1}^{(1)}$ and $ Y_{i2}= Y_{i2}^{(10)}$; for $A_{i}=0$, $ Y_{i1} = Y_{i1}^{(0)}$ and $Y_{i2}= Y_{i2}^{(01)}$. Hence, we obtain the observed data $\left( A_i, \bfX_i, Y_{i1}, Y_{i2}\right), i=1,\dots, n$, and calculate the test statistics $T_{\rm pr}, T_{\rm pr, adj}, T_{\rm cr}, T_{\rm cr, adj}$.
		
		% 6
		\item Steps (2)-(5) are repeated 2,000 times for each $\theta$ and $\lambda$, from which we obtain the empirical powers of the test statistics. 
	\end{enumerate}
	
	\begin{table}[h]
		\centering
		\caption{Empirical power of $T_{\rm pr}, T_{\rm pr, adj}, T_{\rm cr}, T_{\rm cr, adj}$ for a hypothetical trial using data from the REACH study.}
		\begin{tabular}{lllcccc}
			\toprule
			$n$ & $\lambda$ & $\theta$ & $\text{Power}_\mathrm{pr}$ & $\text{Power}_\mathrm{pr, adj}$ & $\text{Power}_\mathrm{cr}$ & $\text{Power}_\mathrm{cr, adj}$ \\
			\hline 
			325 & $0 $ & 0.05 &  0.214 &0.236  & 0.423 &0.446 \\
			&  & 0.08 &  0.342 &0.379 &  0.670 &0.681\\
			&  & 0.10 & 0.439 & 0.484 &  \textbf{0.797} &0.818\\
			&  & 0.15 & 0.738 & 0.770  & 0.973 &0.978 \\
			\hline 
			680 & $0.25 \theta$ & 0.05 & 0.324 &0.354   &0.452 &0.463 \\
			&  & 0.08 & 0.534 &0.570  & 0.680 &0.6701\\
			&  & 0.10 &0.679 &0.720 &  \textbf{0.799} &0.808\\
			&  & 0.15 & 0.905 &0.923 &  0.959 &0.961\\
			\hline 
			1900 & $0.45 \theta$ & 0.05 & 0.558 & 0.579  & 0.523 &0.528 \\
			&  & 0.08 & 0.778 &0.789  & 0.684 &0.688\\
			&  & 0.10 & 0.876 &0.889 &  \textbf{0.791} &0.797\\
			&  & 0.15 &0.988 &0.989 &  0.929 &0.935\\
			\bottomrule
		\end{tabular}
		\label{tb:simu results}
	\end{table}
	
	Table \ref{tb:simu results} shows the empirical powers for $T_{\rm pr}, T_{\rm pr, adj}, T_{\rm cr}, T_{\rm cr, adj}$ for different $\theta$ and $\lambda$ based on 2,000 simulations. The sample sizes $n=325, 680, 1900$ are determined such that $\text{Power}_{\text{cr}}$ is close to 80\% when $\theta=0.10$. From Table \ref{tb:simu results}, first, we can see that the empirical power increases when $\theta$ increases. Second, under either the parallel or crossover design, covariate adjustment always leads to a larger power. Third, when $\lambda=0$, the crossover tests have considerably larger power than the parallel-group tests. When $\lambda= 0.25 \theta$, the crossover tests still have larger power, but the power difference is not as pronounced. When $\lambda=0.45\theta$, the crossover tests become slightly less powerful than the parallel-group tests.

	\section{Discussion}
	\label{sec: discussion}
	The crossover is an efficient trial design that uses participants as their own controls. The carry-over effect, especially the behavioral carry-over effect, is an outstanding concern because it can bias the estimation of the treatment effect. Using a potential outcomes framework and minimal statistical assumptions, we investigate the impact of the carry-over effect in a two-treatment two-period crossover trial. Our results provide a clear characterization of how the carry-over effect influences estimation bias, as well as the type I error and power of tests. When the carry-over effect $\lambda_1+\lambda_0$ is negative, the basic estimator overestimates the treatment effect, which can inflate the type I error of one-sided tests. Conversely, when the carry-over effect $\lambda_1+\lambda_0$ is positive, the basic estimator underestimates the treatment effect, which does not inflate the type I error of one-sided tests but negatively impacts the power. Furthermore, when $\lambda_1+\lambda_0$ is positive but {relatively }small {compared} to the treatment effect, the crossover design can still be more powerful than the parallel-group design. We can further apply covariate adjustment in crossover trials for guaranteed efficiency gain. All the methods in this article can be implemented using the \textsf{RobinCar}  package in \textsf{R}  \citep{ye_2023_8319889}.

	We discuss our results in the context of two studies and demonstrate that the anticipated carry-over effect can be either non-positive or non-negative: one is the REACH study evaluating monthly DVR and daily oral (TDF/FTC), the other is a hypothetical future trial evaluating a single daily DPP and two daily pills. Through these two example studies, our key message is that while the crossover design is an efficient design, the carry-over effect needs to be carefully considered in trial planning and execution. 
	
	In this article, our primary focus has been on the power trade-off between the parallel-group design and the crossover design, particularly in the context of carry-over effects. 
	However, this is just one among various practical considerations. First, a crossover trial typically requires a longer follow-up time, about twice as long as a parallel-group design, and this duration may be even longer if a washout period is incorporated to reduce carry-over effects \citep{senn2002time}. The longer follow-up time can increase costs and the risk of dropout. Second, the longer commitment and the need to switch treatments in a crossover design might discourage participants, potentially slowing recruitment. On the other hand, the opportunity to try multiple treatments in one study can make the crossover design more appealing to participants in certain scenarios \citep{diener2019headache}. Meanwhile, in some trials, experiencing multiple treatments is essential for the trial objective. For example, in the REACH study, the participants need to experience both treatments before entering the choice period, which allows for the study of their preference between the two treatments.

	%  The \backmatter command formats the subsequent headings so that they
	%  are in the journal style.  Please keep this command in your document
	%  in this position, right after the final section of the main part of 
	%  the paper and right before the Acknowledgements, Supporting Information (Supplementary %  Materials),   and References sections. 

	%  This section is optional.  Here is where you will want to cite
	%  grants, people who helped with the paper, etc.  But keep it short!

	\section*{Acknowledgements}
	We are grateful to the study participants, study staff and investigators on the MTN-034/REACH study who provided the data for this analysis. We would also like to thank the anonymous referee, an Associate Editor, the Editor, and Professor Elizabeth Brown for their constructive comments. This work was supported by National Institute of Allergy and Infectious Diseases [NIAID 5 UM1 AI068617].
	
	\vspace{-5mm}
	\section*{Supplementary Materials}
	
	Web Appendices referenced in Sections 2-6 are available with this paper at the Biometrics website on Oxford Academic. Code necessary to reproduce simulation and application 
	results can be found in the supplementary materials.\vspace*{-8pt}

	\section*{Data Availability}
	The data that support the findings in this paper are available from the Microbicide Trials Network. Restrictions apply to the availability of these data, which were used under license for this study. Data are available from the authors with the permission of the Microbicide Trials Network.

\bibliographystyle{apalike}
\bibliography{reference}

\begin{thebibliography}{}

\bibitem[Araujo et~al., 2016]{araujo2016washout}
Araujo, A., Julious, S., and Senn, S. (2016).
\newblock Understanding variation in sets of n-of-1 trials.
\newblock {\em PLoS ONE}, 11(12):e0167167.

\bibitem[Bailey and Kunert, 2006]{bailey2006optimal}
Bailey, R. and Kunert, J. (2006).
\newblock On optimal crossover designs when carryover effects are proportional
  to direct effects.
\newblock {\em Biometrika}, 93(3):613--625.

\bibitem[Brown~Jr, 1980]{brown1980crossover}
Brown~Jr, B.~W. (1980).
\newblock The crossover experiment for clinical trials.
\newblock {\em Biometrics}, 36:69--79.

\bibitem[Celum et~al., 2019]{celum2019prep}
Celum, C.~L., Delany‐Moretlwe, S., Baeten, J.~M., Straten, A., Hosek, S.,
  Bukusi, E.~A., and et~al. (2019).
\newblock {HIV} pre‐exposure prophylaxis for adolescent girls and young women
  in {Africa}: from efficacy trials to delivery.
\newblock {\em Journal of the International AIDS Society,}, 22(S4):e25298.

\bibitem[Cornfield et~al., 1959]{Cornfield1959}
Cornfield, J., Haenszel, W., Hammond, E., Lilienfeld, A., Shimkin, M., and
  Wynder, E. (1959).
\newblock {Smoking and lung cancer}.
\newblock {\em Journal of the National Cancer Institute}, 22:173--203.

\bibitem[Diener et~al., 2019]{diener2019headache}
Diener, H.-C., Tassorelli, C., Dodick, D.~W., Silberstein, S.~D., Lipton,
  R.~B., Ashina, M., Becker, W.~J., Ferrari, M.~D., Goadsby, P.~J.,
  Pozo-Rosich, P., Wang, S.-J., and Mandrekar, J. (2019).
\newblock Guidelines of the international headache society for controlled
  trials of acute treatment of migraine attacks in adults: Fourth edition.
\newblock {\em Cephalalgia}, 39(6):687--710.

\bibitem[{FDA}, 2023]{fda:2019aa}
{FDA} (2023).
\newblock Adjusting for covariates in randomized clinical trials for drugs and
  biological products.
\newblock Guidance for Industry. Center for Drug Evaluation and Research and
  Center for Biologics Evaluation and Research, Food and Drug Administration
  (FDA), U.S. Department of Health and Human Services. May 2023.

\bibitem[Freeman, 1989]{freeman1989performance}
Freeman, P. (1989).
\newblock The performance of the two-stage analysis of two-treatment,
  two-period crossover trials.
\newblock {\em Statistics in medicine}, 8(12):1421--1432.

\bibitem[Friedland et~al., 2021]{friedland2021}
Friedland, B.~A., Mathur, S., and Haddad, L.~B. (2021).
\newblock The promise of the dual prevention pill: A framework for development
  and introduction.
\newblock {\em Frontiers in Reproductive Health}, 3(682689).

\bibitem[Grizzle, 1965]{grizzle1965two}
Grizzle, J.~E. (1965).
\newblock The two-period change-over design and its use in clinical trials.
\newblock {\em Biometrics}, 21:467--480.

\bibitem[Hemming et~al., 2020]{hemming2020use}
Hemming, K., Taljaard, M., Weijer, C., and Forbes, A.~B. (2020).
\newblock Use of multiple period, cluster randomised, crossover trial designs
  for comparative effectiveness research.
\newblock {\em bmj}, 371.

\bibitem[Hernán and Robins, 2020]{hernanrobins}
Hernán, M. and Robins, J. (2020).
\newblock {\em Causal Inference: What If}.
\newblock Boca Raton: Chapman \& Hall/CRC.

\bibitem[Hills and Armitage, 1979]{hills1979two}
Hills, M. and Armitage, P. (1979).
\newblock The two-period cross-over clinical trial.
\newblock {\em British journal of clinical pharmacology}, 8(1):7.

\bibitem[Jemielita et~al., 2016]{jemielita2016improved}
Jemielita, T., Putt, M., and Mehrotra, D. (2016).
\newblock Improved power in crossover designs through linear combinations of
  baselines.
\newblock {\em Statistics in Medicine}, 35(30):5625--5641.

\bibitem[Jones and Donev, 1996]{jones1996modelling}
Jones, B. and Donev, A. (1996).
\newblock Modelling and design of cross-over trials.
\newblock {\em Statistics in Medicine}, 15(13):1435--1446.

\bibitem[Jones and Kenward, 2015]{jones2014design}
Jones, B. and Kenward, M.~G. (2015).
\newblock {\em Design and analysis of cross-over trials}.
\newblock CRC press.

\bibitem[Jones and Lewis, 1995]{jones1995case}
Jones, B. and Lewis, J. (1995).
\newblock The case for cross-over trials in phase {III}.
\newblock {\em Statistics in Medicine}, 14(9):1025--1038.

\bibitem[Kenward and Roger, 2010]{kenward2010use}
Kenward, M.~G. and Roger, J.~H. (2010).
\newblock The use of baseline covariates in crossover studies.
\newblock {\em Biostatistics}, 11(1):1--17.

\bibitem[Kim et~al., 2021]{kim2021handbook}
Kim, K., Bretz, F., Cheung, Y. K.~K., and Hampson, L.~V. (2021).
\newblock {\em Handbook of statistical methods for randomized controlled
  trials}.
\newblock CRC Press.

\bibitem[Kunert and Stufken, 2002]{kunert2002optimal}
Kunert, J. and Stufken, J. (2002).
\newblock Optimal crossover designs in a model with self and mixed carryover
  effects.
\newblock {\em Journal of the American Statistical Association},
  97(459):898--906.

\bibitem[Laird et~al., 1992]{laird1992analysis}
Laird, N.~M., Skinner, J., and Kenward, M. (1992).
\newblock An analysis of two-period crossover designs with carry-over effects.
\newblock {\em Statistics in Medicine}, 11(14-15):1967--1979.

\bibitem[Lin, 2013]{Lin:2013aa}
Lin, W. (2013).
\newblock Agnostic notes on regression adjustments to experimental data:
  Reexamining freedman's critique.
\newblock {\em Annals of Applied Statistics}, 7(1):295--318.

\bibitem[Mehrotra, 2014]{mehrotra2014recommended}
Mehrotra, D.~V. (2014).
\newblock A recommended analysis for 2$\times$2 crossover trials with baseline
  measurements.
\newblock {\em Pharmaceutical statistics}, 13(6):376--387.

\bibitem[Metcalfe, 2010]{metcalfe2010analysis}
Metcalfe, C. (2010).
\newblock The analysis of cross-over trials with baseline measurements.
\newblock {\em Statistics in medicine}, 29(30):3211--3218.

\bibitem[Minnis et~al., 2018]{minnis2018young}
Minnis, A.~M., Roberts, S.~T., Agot, K., Weinrib, R., Ahmed, K., Manenzhe, K.,
  and et~al. (2018).
\newblock Young women’s ratings of three placebo multipurpose prevention
  technologies for {HIV} and pregnancy prevention in a randomized, cross-over
  study in {Kenya} and {South Africa}.
\newblock {\em AIDS and Behavior}, 22(8):2662--2673.

\bibitem[Nair et~al., 2023]{nair2023adherence}
Nair, G., Celum, C., Szydlo, D., Brown, E.~R., Akello, C.~A., Nakalega, R., and
  et~al. (2023).
\newblock Adherence, safety, and choice of the monthly dapivirine vaginal ring
  or oral emtricitabine plus tenofovir disoproxil fumarate for {HIV}
  pre-exposure prophylaxis among african adolescent girls and young women: a
  randomised, open-label, crossover trial.
\newblock {\em The Lancet HIV}.

\bibitem[Neyman, 1923]{Neyman:1923a}
Neyman, J. (1923).
\newblock On the application of probability theory to agricultural experiments.
  essay on principles. section 9.
\newblock {\em Statistical Science}, 5(4):465--472. Trans. Dorota M. Dabrowska
  and Terence P. Speed (1990).

\bibitem[Robins et~al., 1994]{robins1994estimation}
Robins, J.~M., Rotnitzky, A., and Zhao, L.~P. (1994).
\newblock Estimation of regression coefficients when some regressors are not
  always observed.
\newblock {\em Journal of the American statistical Association},
  89(427):846--866.

\bibitem[Rosenbaum, 2020]{rosenbaum2020design}
Rosenbaum, P.~R. (2020).
\newblock {\em Design of observational studies (2nd ed.)}.
\newblock Springer.

\bibitem[Rubin, 1974]{Rubin:1974}
Rubin, D.~B. (1974).
\newblock Estimating causal effects of treatments in randomized and
  nonrandomized studies.
\newblock {\em Journal of Educational Psychology}, 6(5):688--701.

\bibitem[Senn, 1994]{senn1994ab}
Senn, S. (1994).
\newblock The {AB/BA} crossover: past, present and future?
\newblock {\em Statistical methods in medical research}, 3(4):303--324.

\bibitem[Senn, 2002]{senn2002time}
Senn, S. (2002).
\newblock {\em Cross-over Trials in Clinical Research}.
\newblock John Wiley \& Sons, Ltd.

\bibitem[Senn, 1997]{senn1997}
Senn, S.~J. (1997).
\newblock Letter to the editor: The case for cross-over trials in phase iii by
  b. j. jones and j. lewis, statistics in medicine, 14, 1025-1038 (1995).
\newblock {\em Statistics in Medicine}, 16(17):2021--22.

\bibitem[Tsiatis, 2006]{tsiatis2006semiparametric}
Tsiatis, A.~A. (2006).
\newblock Semiparametric theory and missing data.

\bibitem[Tsiatis et~al., 2008]{tsiatis2008covariate}
Tsiatis, A.~A., Davidian, M., Zhang, M., and Lu, X. (2008).
\newblock Covariate adjustment for two-sample treatment comparisons in
  randomized clinical trials: a principled yet flexible approach.
\newblock {\em Statistics in medicine}, 27(23):4658--4677.

\bibitem[Yang and Tsiatis, 2001]{yang2001efficiency}
Yang, L. and Tsiatis, A.~A. (2001).
\newblock Efficiency study of estimators for a treatment effect in a
  pretest--posttest trial.
\newblock {\em The American Statistician}, 55(4):314--321.

\bibitem[Ye et~al., 2023a]{ye_2023_8319889}
Ye, T., Bannick, M., Yi, Y., and Bian, F. (2023a).
\newblock {RobinCar: robust estimation and inference for covariate-adaptive
  randomization, https://github.com/tye27/RobinCar (accessed March 20, 2023)}.

\bibitem[Ye et~al., 2023b]{ye2022toward}
Ye, T., Shao, J., Yi, Y., and Zhao, Q. (2023b).
\newblock Toward better practice of covariate adjustment in analyzing
  randomized clinical trials.
\newblock {\em Journal of the American Statistical Association}, pages 1--13.

\bibitem[Ye et~al., 2022]{ye2022inference}
Ye, T., Yi, Y., and Shao, J. (2022).
\newblock Inference on the average treatment effect under minimization and
  other covariate-adaptive randomization methods.
\newblock {\em Biometrika}, 109(1):33--47.

\bibitem[Yi et~al., 2022]{yi2022testing}
Yi, Y., Zhang, Y., Du, Y., and Ye, T. (2022).
\newblock Testing for treatment effect twice using internal and external
  controls in clinical trials.
\newblock {\em arXiv preprint arXiv:2203.04194}.

\bibitem[Zhao and Ding, 2022]{zhao2022adjust}
Zhao, A. and Ding, P. (2022).
\newblock To adjust or not to adjust? estimating the average treatment effect
  in randomized experiments with missing covariates.
\newblock {\em Journal of the American Statistical Association}, pages 1--11.

\end{thebibliography}

\clearpage
\begin{center}
{\sffamily\bfseries\LARGE
Supplementary Materials
}
\end{center}
\setcounter{equation}{0}
\setcounter{table}{0}
\setcounter{lemma}{1}
\setcounter{section}{0}
\renewcommand{\theequation}{S\arabic{equation}}
\renewcommand{\thetable}{S\arabic{table}}
\renewcommand{\thefigure}{S\arabic{figure}}
\renewcommand{\thesection}{S\arabic{section}}

\section{Additional Power Comparison}
In this section, we check our power calculation by the formula in Section 5 with simulation powers using 10,000 repetitions. Figure \ref{fig: power formula, all} is simply adding the simulation power results on top of Figure 2, and we find good agreement between the powers by formula and powers by simulation. 
\begin{figure}[h] \centering
	\includegraphics[width=\textwidth]{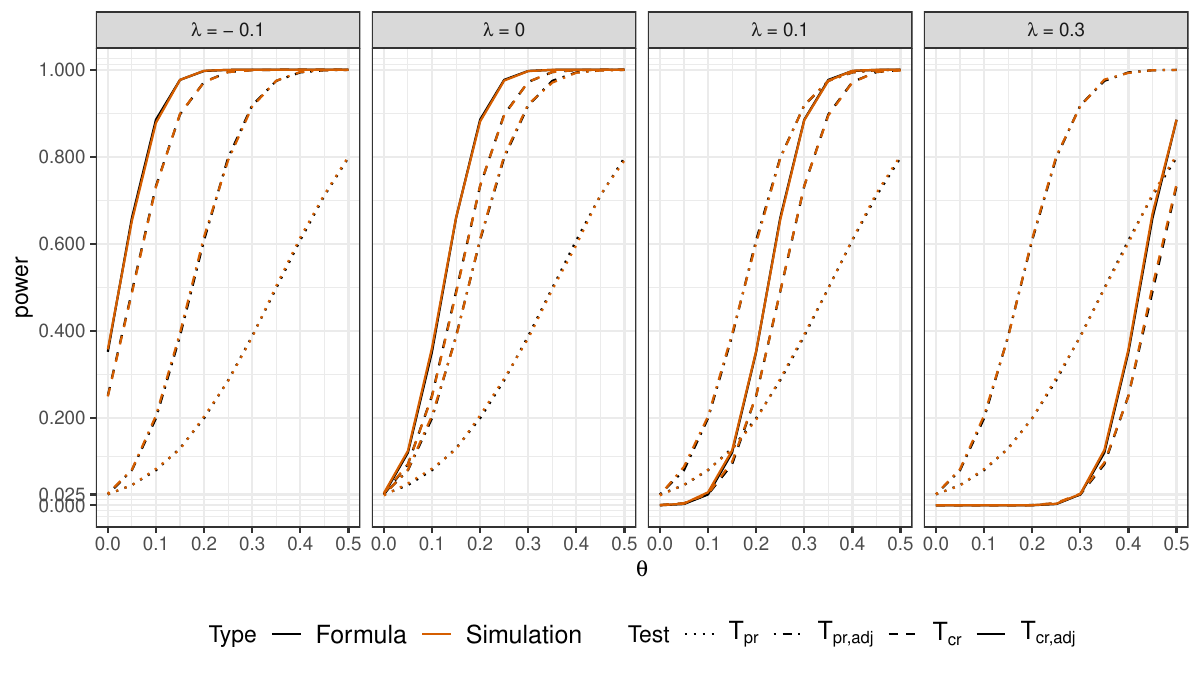}
	\caption{Power curves for four tests under Case I calculated by formula and simulations (with 10,000 repetitions) when $\lambda =-0.1, 0, 0.1, 0.3$ and $b=0$. The obtained power curves using these two methods agree with each other.}\label{fig: power formula, all}
\end{figure}

\section{Controlling for type I error when $\lambda_0 + \lambda_1<0$}

Similar to \cite{yi2022testing}, we consider a sensitivity parameter $\Lambda < 0 $ that bounds the bias, i.e., $\frac12 (\lambda_0 +\lambda_1) \geq  \Lambda $. Then the rejection region 
\begin{align*}
	\frac{\sqrt{n} \left(\hat \theta_{\rm cr} - \theta^* + \Lambda \right)}{\hat\sigma_{\rm cr}} > z_{1-\alpha}
\end{align*}
can control the type I error at level $\alpha$. 

For the implementation of the sensitivity analysis, practitioners are not required to specify the value of the sensitivity parameter $\Lambda$. Results from the sensitivity parameter can be summarized by the ``tipping point'' -- the magnitude of $\Lambda$ that would be needed such that the null hypothesis can no longer be rejected  \citep{Cornfield1959, rosenbaum2020design}. If such a value of $\Lambda$ is deemed implausible, then we still have evidence to reject the null hypothesis.

\section{Technical Proofs}
\subsection{Proof of Theorem 1}
\begin{proof}
	(a)	Let $\tau= E\left(Y_{i2}^{(0)} - Y_{i1}^{(0)} \right)$ denote the expected time trend in the absence of treatment. Then
	\begin{align*}
		&	E(\hat\theta_{\rm cr} )  \\
		&= \frac{1}{2} E\left\{ \frac{1}{n_1} \sum_{i=1}^{n} A_i  ( Y_{i1} - Y_{i2} )  +  \frac{1}{n_0} \sum_{i=1}^{n} (1-A_i)  ( Y_{i2} - Y_{i1} )   \right\} \\
		&= \frac{1}{2} E\left\{ \frac{1}{n_1} \sum_{i=1}^{n} A_i  \left( Y_{i1}^{(1)} - Y_{i2}^{(0)} \right)  +  \frac{1}{n_0} \sum_{i=1}^{n} (1-A_i)  \left( Y_{i2}^{(1)} - Y_{i1}^{(0)} \right)   \right\} \\
		&= \frac{1}{2} E\left\{ \frac{1}{n_1} \sum_{i=1}^{n} A_i  \left( Y_{i1}^{(1)} -Y_{i1}^{(0)} +   Y_{i1}^{(0)} - Y_{i2}^{(0)} \right)  +  \frac{1}{n_0} \sum_{i=1}^{n} (1-A_i)  \left( Y_{i2}^{(1)} - Y_{i2}^{(0)} + Y_{i2}^{(0)}- Y_{i1}^{(0)} \right)   \right\} \\
		&= \frac{1}{2} (\theta_1 - \tau) + \frac12 (\theta_2 + \tau) = \frac{\theta_1+\theta_2}{2}.
	\end{align*}
	
	(b)  Asymptotic normality is straightforward from the central limit theorem. In what follows, we derive the asymptotic variance of $\sqrt{n}(\hat\theta_{\rm cr}  -\frac{\theta_1+\theta_2}{2})$. Note that 
	\begin{align*}
		\hat\theta_{\rm cr} -\frac{\theta_1+\theta_2}{2} =&  \underbrace{\frac{1}{2} \left\{ \frac{1}{n_1} \sum_{i=1}^n A_i \left( Y_{i1}^{(1)}- Y_{i2}^{(10)} -\theta_1+\tau \right) + \frac{1}{n_0} \sum_{i=1}^n (1-A_i) \left( Y_{i2}^{(01)}- Y_{i1}^{(0)} -\theta_2-\tau \right) \right\}}_{M_1}
	\end{align*}
	We can show that 
	\begin{align*}
		&\sqrt{n} M_1 \mid A_1,\dots, A_n \xrightarrow{d} N\left(0, \frac{\var \left(Y_{i1}^{(1)}-Y_{i2}^{(0)} \right) }{4\pi_1} + \frac{\var \left(Y_{i2}^{(1)}-Y_{i1}^{(0)} \right)}{ 4\pi_0} \right). 
	\end{align*}
	From the bounded convergence theorem, this result still holds unconditionally, i.e., 
	\begin{align*}
		\sqrt{n} \left(\hat\theta_{\rm cr}  - \frac{1}{2} (\theta_1 +\theta_2) \right) \xrightarrow{d} N\left(0, \frac{\var \left(Y_{i1}^{(1)}-Y_{i2}^{(0)} \right) }{4\pi_1} + \frac{\var \left(Y_{i2}^{(1)}-Y_{i1}^{(0)} \right)}{ 4\pi_0} \right).
	\end{align*}
	
\end{proof}

\subsection{Proof of Theorem  2}
(a) We calculate the expectation as follows:
\begin{align*}
	E(\hat\theta_{\rm cr}) &= \frac{1}{2} E \left\{ \frac{1}{n_1} \sum_{i=1}^n A_i \left( Y_{i1}^{(1)}- Y_{i2}^{(10)} \right)+ \frac{1}{n_0} \sum_{i=1}^n (1-A_i) \left( Y_{i2}^{(01)}- Y_{i1}^{(0)} \right) \right\}\\
	&= \frac{1}{2} E \bigg\{ \frac{1}{n_1} \sum_{i=1}^n A_i \left( Y_{i1}^{(1)}- Y_{i1}^{(0)}+ Y_{i1}^{(0)}- Y_{i2}^{(00)}+ Y_{i2}^{(00)}- Y_{i2}^{(10)} \right)\\
	&\quad + \frac{1}{n_0} \sum_{i=1}^n (1-A_i) \left( Y_{i2}^{(01)}- Y_{i2}^{(11)}+ Y_{i2}^{(11)}- Y_{i2}^{(00)}+ Y_{i2}^{(00)}- Y_{i1}^{(0)} \right)\\
	&= \frac{1}{2} (\theta_1- \tilde\tau -\lambda_0)+ \frac{1}{2} \left(\tilde{\theta}_2 -\lambda_1 +\tilde{\tau} \right) =\frac{1}{2} \left( \theta_1 +\tilde{\theta}_2 -\lambda_0 -\lambda_1 \right).
\end{align*}

(b)  Asymptotic normality is straightforward from the central limit theorem. In what follows, we derive the asymptotic variance of $\sqrt{n}(\hat\theta_{\rm cr}  - 2^{-1} \left( \theta_1+ \tilde{\theta}_2 -\lambda_0 -\lambda_1) \right)$. Similar to the proof of Theorem 1(b), we have that 
\begin{align*}
	\sqrt{n} \left(\hat\theta_{\rm cr}  - \frac{1}{2} \left( \theta_1 +\tilde{\theta}_2 -\lambda_0 -\lambda_1 \right) \right) \xrightarrow{d} N\left(0, \frac{\var(Y_{i1}^{(1)}-Y_{i2}^{(0)}) }{4\pi_1} + \frac{\var(Y_{i2}^{(1)}-Y_{i1}^{(0)})}{ 4\pi_0} \right).
\end{align*}

\subsection{Generating correlated binary random variables}
\begin{lemma}\label{theo: correlated binary}
	Consider a binary variable $Z_1 \sim \mathrm{Bernoulli} (p_1)$ where $\mathrm{Pr} (Z_1=1) =p_1$. Given some $p_2 \in (0,1)$ and
	\begin{align*}
		{-1}& {\le \max \left\{ -\sqrt{ \frac{p_1 p_2 }{(1-p_1) (1-p_2)}}, -\sqrt{ \frac{(1-p_1) (1-p_2) }{p_1 p_2}} \right\} < \rho}\\ 
		&{<\min \left\{ \sqrt{ \frac{p_1 (1-p_2) }{p_2 (1-p_1)}}, \sqrt{ \frac{p_2 (1-p_1) }{p_1 (1-p_2)}} \right\} \le1.}
	\end{align*}
	If $Z_2|Z_1=1 \sim \mathrm{Bernoulli} \left( \frac{s}{p_1} \right)$ and $Z_2|Z_1=0 \sim \mathrm{Bernoulli} \left( \frac{p_2-s}{1-p_1} \right)$ where $s=\rho \sqrt{p_1 (1-p_1) p_2 (1-p_2)}$ $+p_1 p_2$, then $E(Z_2) =p_2$ and $\corr (Z_1, Z_2)= \rho$. 
\end{lemma}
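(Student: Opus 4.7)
The plan is to verify the two claims $E(Z_2) = p_2$ and $\corr(Z_1, Z_2) = \rho$ by direct computation using the law of total expectation, and then to check that the stated bounds on $\rho$ are precisely what is needed to make the two conditional Bernoulli probabilities lie in $[0,1]$.

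First I would compute $E(Z_2)$ by conditioning on $Z_1$:
$E(Z_2) = p_1 \cdot (s/p_1) + (1-p_1) \cdot (p_2 - s)/(1-p_1) = s + (p_2 - s) = p_2$,
which gives the first claim for free. Next, since $Z_1$ is $\{0,1\}$-valued, $E(Z_1 Z_2) = P(Z_1 = 1) E(Z_2 \mid Z_1 = 1) = p_1 \cdot (s/p_1) = s$, so
$\cov(Z_1, Z_2) = s - p_1 p_2 = \rho \sqrt{p_1(1-p_1) p_2 (1-p_2)}$.
Dividing by $\sqrt{\var(Z_1) \var(Z_2)} = \sqrt{p_1(1-p_1) p_2 (1-p_2)}$ (which is well defined since $p_1, p_2 \in (0,1)$) yields $\corr(Z_1, Z_2) = \rho$, establishing the second claim.

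The remaining step is to check that the given bounds on $\rho$ ensure $s/p_1 \in [0,1]$ and $(p_2 - s)/(1-p_1) \in [0,1]$, so that the conditional distributions actually specify valid Bernoulli variables. The condition $s \ge 0$ rearranges to $\rho \ge -\sqrt{p_1 p_2 / \{(1-p_1)(1-p_2)\}}$, and $s \le p_1$ rearranges to $\rho \le \sqrt{p_1(1-p_2) / \{p_2(1-p_1)\}}$. Similarly, $p_2 - s \ge 0$ gives $\rho \le \sqrt{p_2(1-p_1)/\{p_1(1-p_2)\}}$, and $p_2 - s \le 1 - p_1$ gives $\rho \ge -\sqrt{(1-p_1)(1-p_2)/(p_1 p_2)}$. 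Taking the intersection of these four one-sided inequalities reproduces exactly the stated two-sided bound on $\rho$ in the lemma.

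There is no real obstacle here; the computation is routine. The only mildly delicate point is making sure the four inequalities on $\rho$ derived from the two conditional probabilities lying in $[0,1]$ combine to give precisely the max/min expression in the statement, which is just a matter of carefully matching signs and ratios of $p_1, p_2, 1-p_1, 1-p_2$.
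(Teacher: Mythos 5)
Your proposal is correct and follows essentially the same route as the paper's proof: compute $E(Z_2)=p_2$ by total expectation, note $E(Z_1Z_2)=s$ so that $\cov(Z_1,Z_2)=s-p_1p_2=\rho\sqrt{p_1(1-p_1)p_2(1-p_2)}$, and divide by the product of standard deviations. If anything, you supply more detail than the paper on the one point it labels ``straightforward,'' namely that the four inequalities forcing $s/p_1$ and $(p_2-s)/(1-p_1)$ into the unit interval rearrange exactly to the stated max/min bounds on $\rho$.
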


\begin{proof}
	{First, it is straightforward to show that when $\rho$ satisfies the aforementioned condition, it implies that}
	$${\begin{cases} -1\le \rho \le 1\\ 0<\frac{s}{p_1}<1\\ 0< \frac{p_2-s}{1-p_1} <1 \end{cases} }$$
	Then, we could also show that
	\begin{align*}
		&\mathrm{Pr} (Z_2=1| Z_1=1)= \frac{ \mathrm{Pr} (Z_2=1, Z_1=1) }{\mathrm{Pr} (Z_1=1)}= \frac{s}{p_1} \Rightarrow \mathrm{Pr} (Z_2=1, Z_1=1)=s\\
		&\mathrm{Pr} (Z_2=1| Z_1=0)= \frac{ \mathrm{Pr} (Z_2=1, Z_1=0) }{\mathrm{Pr} (Z_1=0)}= \frac{p_2-s}{ 1-p_1} \Rightarrow \mathrm{Pr} (Z_2=1, Z_1=0)= p_2-s.
	\end{align*}
	This implies that 
	\begin{align*}
		& \mathrm{Pr} (Z_2=1)= \mathrm{Pr} (Z_2=1, Z_1=1)+ \mathrm{Pr} (Z_2=1, Z_1=0)= p_2= E(Z_2)
	\end{align*}
	Then we have $\var (Z_2)= p_2 (1-p_2)$. Given that $\var (Z_1)= p_1 (1-p_1)$, we have
	\begin{align*}
		\corr (Z_1, Z_2) &= \frac{\cov (Z_1,Z_2) }{ \sqrt{ \var (Z_1) \var (Z_2)}}= \frac{E(Z_1 Z_2) -p_1 p_2}{ \sqrt{p_1 (1-p_1) p_2 (1-p_2)}}\\
		&= \frac{\mathrm{Pr} (Z_1=1, Z_2=1) -p_1 p_2}{ \sqrt{p_1 (1-p_1) p_2 (1-p_2)}}
		= \frac{s- p_1 p_2}{ \sqrt{p_1 (1-p_1) p_2 (1-p_2)}} =\rho
	\end{align*}
\end{proof}

% \section{Two-Sample T-Tests for Treatment Effects}
%{\blue To test $H_0:  2^{-1}(\theta_1+\theta_2) = \theta^*$ versus $H_A:  2^{-1}(\theta_1+\theta_2)> \theta^*$, under assumptions of normality and independence}, when it is eligible to assume $\mathrm{Var} (Y_{i1}^{(1)}- Y_{i1}^{(0)})= \mathrm{Var} (Y_{i2}^{(01)}- Y_{i2}^{(10)})$, the test statistic based on the basic estimator is
%$$T_{\mathrm{T}, \mathrm{cr}}=\frac{\hat{\theta}_{\mathrm{cr}}-\theta^*}{\sqrt{\frac{\left(n_1-1\right) S_{\Delta_1}^2+\left(n_0-1\right) S_{\Delta_0}^2}{n_1+n_0-2}} \cdot \sqrt{\frac{1}{4 n_1}+\frac{1}{4 n_0}}}$$ and $T_{\mathrm{T}, \mathrm{cr}}$ follows a t-distribution with $n_1+n_0-2$ degrees of freedom. 
%When $\operatorname{Var}\left(Y_{i 1}^{(1)}-Y_{i 1}^{(0)}\right) \neq \operatorname{Var}\left(Y_{i 2}^{(01)}-Y_{i 2}^{(10)}\right)$, we can perform a Welch's t-test and the test statistic is
%$$T_{\mathrm{T}^*, \mathrm{cr}}=\frac{\hat{\theta}_{\mathrm{cr}}-\theta^*}{\sqrt{\frac{S_{\Delta_1}^2}{4 n_1}+\frac{S_{\Delta_0}^2}{4 n_0}}}$$ and $T_{\mathrm{T}^*, \mathrm{cr}}$ follows a t-distribution with the degrees of freedom being
%$$\text{df}=\frac{\left(\frac{S_{\Delta_1}^2}{n_1}+\frac{S_{\Delta_0}^2}{n_0}\right)^2}{\frac{1}{n_1-1}\left(\frac{S_{\Delta_1}^2}{n_1}\right)^2+\frac{1}{n_0-1}\left(\frac{S_{\Delta_0}^2}{n_0}\right)^2}.$$

\end{document}